\newcounter{result}
\newtheorem{theorem}[result]{\textbf{Theorem}}
\newtheorem{proposition}[result]{\textbf{Proposition}}
\newtheorem{definition}{\textbf{Definition}}
\newtheorem{example}{\textbf{Example}}
\newtheorem{lemma}{\textbf{Lemma}}
\DeclareMathOperator*{\argmax}{argmax}
\begin{document}

\author{Ata Atay\thanks{Department of Mathematical Economics, Finance and Actuarial Sciences, University of Barcelona, and Barcelona Economic Analysis Team (BEAT), Barcelona, Spain. E-mail: \href{mailto:aatay@ub.edu}{aatay@ub.edu}} \and Eric Bahel\thanks{Department of Economics, Virginia Polytechnic Institute and State University, Blacksburg, VA 24061-0316, USA. E-mail: \href{mailto:erbahel@vt.edu}{erbahel@vt.edu}} \and Tam\'as Solymosi\thanks{Department of Operations Research and Actuarial Sciences, and Corvinus Centre for Operations Research, Corvinus University of Budapest, Budapest, Hungary. E-mail: \href{mailto:tamas.solymosi@uni-corvinus.hu}{tamas.solymosi@uni-corvinus.hu}}}

\title{Matching markets with middlemen \\under transferable utility\thanks{Ata Atay is a Serra H\'{u}nter Fellow (Professor Lector Serra H\'{u}nter) under the Serra H\'{u}nter Plan (Pla Serra H\'{u}nter). Ata Atay gratefully acknowledges financial support from the Spanish Ministry of Science and Innovation through grant PGC2018-096977-B-100. Tam\'{a}s Solymosi gratefully acknowledges financial support from the Hungarian National Research, Development and Innovation Office through grant NKFIH K-119930.}}
\maketitle

\begin{abstract}
This paper studies matching markets in the presence of middlemen. In our framework, a buyer-seller pair may either trade directly or use the services of a middleman; and a middleman may serve multiple buyer-seller pairs. Direct trade between a buyer and a seller is costlier than a trade mediated by a middleman. For each such market, we examine an associated TU game. First, we show that an optimal matching for a matching market with middlemen can be obtained by considering the two-sided assignment market where each buyer-seller pair is allowed to use the mediation service of the middlemen free of charge. Second, we prove that matching markets with middlemen are balanced. Third, we show the existence of a buyer-optimal and a seller-optimal core allocations. In general, the core does not exhibit a middleman-optimal allocation. Finally, we establish the coincidence between the core and the set of competitive equilibrium payoff vectors.

\medskip
\noindent \textbf{Keywords} : Assignment games $\cdot$ core $\cdot$ competitive equilibrium $\cdot$ matching markets $\cdot$ matchmakers $\cdot$ middlemen \newline
\noindent \textbf{JEL Classifications} : C71 $\cdot $ C78 $\cdot$ D47 \newline
\noindent \textbf{Mathematics Subject Classification} : 91A12 $\cdot$ 91A43 $\cdot$ 91B68
\end{abstract}

\newpage
\section{Introduction}

Consider a commodity whose market exhibits three types of agents: buyers, sellers, and middlemen. Each seller owns one indivisible unit; and each buyer seeks to purchase one unit (from any of the sellers) in exchange for money. Units may not be homogeneous, \textit{i.e.}, a buyer may have different valuations for the respective units owned by two distinct sellers.
We assume that utility is transferable between all agents; and this allows the use of cooperative games with transferable utility (or TU games, for short). A given buyer  and a given seller may trade directly, or they may use the services of a middleman. For example, in the real estate market, a seller may or may not use a realtor facilitating the sale of her house. In financial markets, brokers provide their service to investors  (in exchange for a fee); and each investor may or may not hire a broker. As is common in these applications, we assume that a middleman may serve multiple buyer-seller pairs. 

Markets with middlemen have been studied in  different contexts (search and matching models, general equilibrium model, etc.). \cite{rw87} was the first work  to study the activity of middlemen in search markets. In \cite{y94}  agents can search for matches on their own, or they can resort to a middleman who mediates between agents of opposite sides to facilitate their pairing. \cite{f97} investigates competition between middlemen when direct trade between buyers and sellers is available. He showed that direct trade has a negative effect on the market power of middlemen. \cite{br00} consider a market with search frictions and a monopolistic middleman where buyers and sellers bargain over the surplus. \cite{jl02} study a model in which sellers and buyers have heterogeneous tastes. They showed that middlemen are better off if they have a multi-unit inventory of differentiated products.

In their seminal paper, \cite{ss71} used a TU game to model a two-sided housing market where there are $m$ buyers and $n$ sellers. In their setting, each buyer is interested in  buying at most one house and each seller has one house for sale. Each buyer has $n$ valuations (one for each house) each seller has a reservation value for her house. The valuation matrix represents the joint surplus generated by each pair formed by a buyer and a seller. For this market situation the associated TU game, the so-called \emph{assignment game}, is defined. They studied a solution concept, the \emph{core}, which is the set of allocations that cannot be improved upon by any coalition. They showed that the core of an assignment game is always non-empty and has a lattice structure. Moreover, \cite{d82} and \cite{l83} proved that there exists a core allocation at which each buyer attains his/her marginal contribution to the grand coalition (the buyer-optimal core allocation) and there exists a core allocation at which each seller attains his/her marginal contribution to the grand coalition (the seller-optimal core allocation). 

Multi-sided matching markets  may in general have an empty core  under transferable utility (\citealp{kw82}). Thus, the remarkable results obtained for  two-sided markets cannot be generalized to all multi-sided markets. Several authors have examined conditions (on the structure of the market) allowing to show the non-emptiness of the core. \cite{s99} introduced a subclass of multi-sided matching markets where valuations are obtained from a supermodular function. She proved that any game in this subclass has a non-empty core. Some other authors have shown that matching markets exhibiting some additivity property have a non-empty core (see for instance \citealp{q91}; \citealp{t13}; \citealp{aetal16}). 

Among different multi-sided matching market models, there is a growing literature on matching markets with middlemen. \cite{s97} introduces a three-sided matching game with middlemen, the so-called \emph{supplier-firm-buyer game}. In this model, a buyer and a seller (supplier) can trade only through a middleman (firm). Hence, unlike our model, a mixed-pair of buyer-seller cannot generate any surplus without a middleman. The author showed that the class of supplier-firm-buyer game is balanced.

\cite{os14} consider a model of three-sided matching markets in which middlemen can mediate at most one trade between a buyer and seller. Buyers and sellers are allowed to trade directly as well as trade through a middleman. Unlike our model, middlemen incur a matching cost and moreover, the associated TU game only considers the matching situations with triplets of buyer-middleman-seller. Moreover, the present paper also relaxes the condition on middlemen by allowing each of them to serve  multiple buyer-seller pairs. 

In a recent paper, \cite{eom19} study a hybrid model of two-sided and multi-sided matching markets. They consider a two-sided model with buyers and sellers that are not disjoint. There exists a so-called central player who can act both as a buyer and as a seller. In their model, the central player has to be present for a trade between a buyer-seller pair. Otherwise, a trade cannot be realized. Hence, the central player has  veto power and, as explained by the authors, their model thus induces a veto game (\citealp{b16}). 

The present work takes a game-theoretical approach to matching markets with middlemen. We consider a class of three-sided matching market in which buyers and sellers can trade directly or indirectly through middlemen. Each seller owns an object to sell and each buyer wants to acquire at most one object. A trade between a mixed-pair of a buyer and a seller can be mediated by at most one middleman, meanwhile any given middleman can mediate trades between multiple buyer-seller pairs. Utility is transferable and quasi-linear in money. We assume that a direct trade between a buyer-seller pair is more costly (and therefore generates a lower surplus) than when a middleman is involved (thanks to her knowledge of the market, the middleman allows the buyer and the seller to lower their search costs and travel costs). Given a buyer-seller pair, the surplus generated by their exchange varies depending on the middleman serving them.  As mentioned before, every buyer (seller) can trade with at most one seller (buyer), whereas any given middleman may serve multiple buyer-seller pairs (note that any such pair is served by at most one middleman).

In order to study the core and its structure, we propose a simple procedure allowing to compute the worth of the grand coalition in any matching market with middlemen. Precisely, we construct an associated two-sided assignment market where the valuation of every buyer-seller pair is obtained by taking the maximum surplus that they can achieve either by a direct trade between themselves or by an indirect trade brokered by any of the middlemen in the market. In a similar fashion, \cite{mn11} introduce the maximum assignment game for a given collection of assignment games where any given coalition attains the maximum possible value among the given collection of games. However, in their case, the two authors observed that the maximum assignment game need not be an assignment game, and it may not even be superadditive.

Our main results are described as follows. First, we show that an optimal matching for a matching market with middlemen can always be constructed from an optimal matching of the associated two-sided market, and vice versa. Moreover, the maximum total surplus in the two markets are equal (Proposition \ref{prop:opt_matchings}). Second, we prove that the core of a market with middlemen is always non-empty by showing that the set of payoff vectors composed of a core allocation for the two-sided assignment market and zero payoffs to all middlemen is precisely the subset of the core of the market with middlemen where all middlemen payoffs are zero (Theorem \ref{TheoremCoreNonempty}). Furthermore, we prove that there exists a buyer-optimal allocation, that is, a core allocation that each and every buyer (weakly) prefers to all other core allocations. Likewise, there exists a seller-optimal core allocation (Theorem \ref{TheoremBuyOptimal}). Moreover, as in the standard two-sided model, our results guarantee that, at the buyer-optimal (seller-optimal) core allocations, each buyer (seller) achieves her  marginal contribution to the grand coalition. Interestingly, we provide an example  showing that, in general, there exists no middleman-optimal core allocation (Example \ref{ex:no_mc_middlemen}): all middlemen do not necessarily achieve their maximum core  payoffs simultaneously. Finally, we characterize the core in terms of competitive equilibrium payoffs (Theorem \ref{thm:Core_euqivalence_CE}).

The paper is organized as follows. In Section \ref{sec:prel} some preliminaries are given. Section \ref{sec:mode} introduces the model and explores the structure of its outcomes, the matchings. In Section \ref{sec:core} we prove the non-emptiness of the core and prove that there exists a side-optimal core allocation for the buyer side and also for the seller side in the market. In contrast, by means of an example we demonstrate that there need not exist a middleman-optimal core allocation. In Section \ref{sec:CE} we establish the coincidence between the core and the set of competitive equilibrium payoff vectors. Section \ref{sec:conc} concludes.

\section{Preliminaries}
\label{sec:prel}

A \textit{cooperative game with transferable utility}  (or TU game) is a pair  $(N,v)$ where $N$ is a non-empty, finite set of \textit{players (or agents)} and $v:2^{N}\rightarrow\mathbb{R}$ is a \textit{coalitional function} satisfying $v(\emptyset)=0$.  The number $v(S)$ is the \emph{worth} of the coalition $S\subseteq N$. Whenever no  confusion may arise as to the set of players, we will identify a TU game $(N,v)$ with its coalitional function $v$. 

Given a game $v$, a \textit{payoff allocation} (or allocation) is a tuple $x\in\mathbb{R}^{N}$ representing the players' respective allotments. The total payoff of a coalition $S\subseteq N$ is denoted by $x(S)=\sum_{t\in S} x_{t}$ if $S\neq \emptyset$ and $x(\emptyset)=0$. 

In a game $v$, an  allocation $x$ is called \emph{efficient} if $x(N)=v(N)$,  \textit{individually rational} if $x_{t}=x(\{t\})\geq v(\{t\})$ for all $t\in N$, and \textit{coalitionally rational} if $x(S)\geq v(S)$ for all $S\subseteq N$. The \textit{core} of $v$, denoted by $Core(v)$, is the set of coalitionally rational and efficient payoff allocations. A game is called \emph{balanced} if it has a non-empty core, and \emph{totally balanced} if all the subgames, i.e. the game restricted to the non-empty coalitions, are balanced. A totally balanced game $v$ is balanced and also \textit{superadditive}, i.e. $v(S\cup T)\geq v(S)+v(T)$ for any  coalitions $S,T\subseteq N$ such that $S\cap T=\emptyset$.

We call \textit{marginal contribution} of a player $t\in N$ in the game $v$ the quantity $mc_t(v)=v(N)-v(N\setminus \{t\})$.
It is well known that the marginal contribution is an upper bound of the payoffs attainable in the core for a player, i.e. $x_t\leq mc_t(v)$ for all $x\in Core(v)$ and $t\in N$, but this bound is not necessarily sharp.

\section{Matching markets with middlemen}
\label{sec:mode}

We consider a three-sided market where there are three disjoint sets of agents: the set of buyers $B=\{b_1,b_2,\ldots , b_I\}$, the set of middlemen $M=\{m_1, m_2, \ldots , m_J\}$, and the set of sellers $S=\{s_1,s_2,\ldots, s_K\}$. Note that the cardinalities $I,J,K$ of these respective sets  may differ. We call $B$ (or $S$) the \textit{short side of the market} if it holds that $I\leq K$ (or $|S|\leq |B|$). Let $N=B\cup M\cup S$ be the set containing all agents. In this market, each buyer-seller pair $(i,k)\in B\times S$ can trade directly with each other, or indirectly through some middleman $j\in M$ which results in a trade involving the triple $(i,j,k)\in B\times M\times S$. 

Each seller owns one unit of good and each buyer seeks to buy at most one unit of good.  Although a trade between each buyer-seller pair can be mediated by at most one middleman, any given middleman can mediate trades between multiple buyer-seller pairs. That is to say, each $j\in M$ can potentially serve the entire market by brokering as many trades as the cardinality of  the short side of the market. 

A market with middlemen can thus be described by specifying two non-negative matrices: (a) a two-dimensional  matrix $A=(a_{ik})_{\substack{i\in B \\ k\in S}}$ giving  the joint monetary \textit{surplus generated by every mixed pair} $(i,k)\in B\times S$ if they trade directly, and (b) a \textbf{three-dimensional} non-negative matrix $\hat{A}=(\hat{a}_{ijk})_{\substack{i\in B\\ j\in M \\k\in S}}$  representing the \textit{joint surplus generated by a trade between buyer $i$ and seller $k$ that is mediated by middleman $j$}. It will often be convenient to represent the three-dimensional surplus matrix $\hat{A}$ as an array of its two-dimensional layer submatrices indexed by the middlemen. Formally, $\hat{A}=[A^{(j)}\in \mathbb{R}^{B\times S}: j\in M]$ with elements $a^{(j)}_{ik}=\hat{a}_{ijk}$ for all $(i,j,k)\in B\times M\times S$. For sake of unified notation, we denote the two-dimensional surplus matrix $A$ as $A^{(0)}$ with elements $a^{(0)}_{ik}=a_{ik}$ for all $(i,k)\in B\times S$.

We will assume that 
\begin{equation}\label{Assmatrices}
a^{(j)}_{ik}=\hat{a}_{ijk}\geq a_{ik}=a^{(0)}_{ik}, ~\forall (i,j,k)\in B\times M\times S,
\end{equation} that is to say, a direct trade  between a buyer-seller pair  entails higher search costs than when  a middleman is involved. Hence, the total surplus of a trade with a middleman is greater than or equal to that of a trade without a middleman.

A \emph{market with middlemen} is fully described by a tuple of the type $\gamma=(B,M,S,A,\hat{A})$. Since the sets $B,M,S$ are  given and fixed, we will often describe such a  market with middlemen by simply specifying a pair of matrices $(A,\hat{A})$ satisfying (\ref{Assmatrices}).  

Call \textit{basic coalition} any subset of $N$ that is either a singleton $\{i\}$, or a pair $\{i,k\}$ such that $i\in B$ and $k\in S$, or a triple   $\{i,j, k\}$ such that $i\in B$, $j\in M$ and $k\in S$. Moreover, let  $\mathcal{B}^{N}=\{\{i,j,k\}\mid i\in B, j\in M, k\in S\}\cup\{\{i,k\}\mid i\in B, k\in S\}\cup \{\{i\}\mid i\in N\}$ be the collection of all basic coalitions. Furthermore, for all  $T\subseteq N$, denote by $\mathcal{B}^{T}$ the set of basic coalitions that have all their agents in $T$, that is,  $\mathcal{B}^{T}=\{E\in\mathcal{B}^{N} \mid E\subseteq T\}$.
Denote by $B_T$, $M_T$, and $S_T$ the set of buyers, middlemen, and sellers in coalition $T$, respectively. 

\begin{definition}\label{defmatching}
Given any $T\in 2^N$,  a collection  of basic coalitions $\mu$ will be called a $T$-\textit{matching} if it satisfies \textbf{(i)} $\mu\subseteq \mathcal{B}^{T}$; \textbf{(ii)} $B_T\cup S_T\subseteq \bigcup\limits_{E\in \mu} E$;   \textbf{(iii)} for any $t\in {B_T\cup S_T}$ and any distinct $E,F\in \mu$,  $t\notin E\cap F$; \textbf{(iv)} for all $j\in M_T$,   $\left[\{j\}\in \mu \right]\Rightarrow [j\notin E, \forall E\in \mu\setminus \{j\}]$.
\end{definition}

Remark that conditions (i)-(iii) in Definition \ref{defmatching} say that a buyer (seller) must belong to exactly one basic coalition in the collection $\mu$. It is possible for a middleman to belong to multiple basic triples of $\mu$ (since she may mediate multiple trades). However, as stated in (iv), a middleman  appearing in a singleton of $\mu$ should not belong to any other element of $\mu$. With a slight abuse of notation, we  write $k= \mu(i)$ and $i= \mu(k)$ for all $(i,k)\in B\times S$ such that $\left[\{i,k\}\in \mu \mbox{ or } \{i,j,k\}\in \mu \mbox{ for some } j\in M\right]$.
 We also write $\mu(t)=t$ for all $t\in N$  such that $\{t\}\in \mu$.
Let $\mathcal{A}(T)$ denote the set of $T$-matchings. 

Observe that a $T$-matching $\mu$ induces disjoint groups of buyer-seller pairs that trade via the same middleman. With a slight abuse of notation, we shorthand the subsets containing only one agent from each side of the market as an array in which the ordered specifies the type of the agents: $(i,k)$ means $\{i,k\}$ with $i\in B$ and $k\in S$; similarly, $(i,j,k)$ means $\{i,j,k\}$ with $i\in B$, $j\in M$, and $k\in S$. We call the buyers in the set  $B^{\mu}_{j}=\{i\in B_T : (i,j,k)\in\mu \textrm{ for some } k\in S_T\}$ and the sellers in the set $S^{\mu}_{j}=\{k\in S_T : (i,j,k)\in\mu \textrm{ for some } i\in B_T\}$ the \textit{partners} of middleman $j\in M_T$ in $T$-matching $\mu$. Let $M^{\mu}_{+}$ denote the set of those middlemen in $T$ who are involved in some trading triplet under $\mu$. Denote by $B^{\mu}_{0}$ ($S^{\mu}_{0}$) the set of those buyers (sellers) in coalition $T$, who are not partners of any middleman but are involved in some direct trade under $\mu$, as if they were partners of a fictitious middleman denoted by 0. Finally, denote the set of buyers, middlemen, and sellers in $T$ who are singletons in $\mu$ by $B^{\mu}_{s}$, $M^{\mu}_{s}$, and $S^{\mu}_{s}$ respectively. 

Obviously, $M^{\mu}_{s}$ together with the singletons $\{j\}$ $(j\in M^{\mu}_{+}=M_T\setminus M^{\mu}_{s})$ form a partition of $M_T$, $B^{\mu}_{s}$ together with the partner sets $B^{\mu}_{j}$ $(j\in M^{\mu}_{+}\cup \{0\})$ form a partition of $B_T$, and $S^{\mu}_{s}$ together with the partner sets $S^{\mu}_{j}$ $(j\in M^{\mu}_{+}\cup \{0\})$ form a partition of $S_T$. Moreover, the union of these three partitions form a partition of coalition $T$, called \textit{the $\mu$-induced partition} of $T$. Notice that $\mu$ induces a (complete) matching, denoted by $\mu^{(j)}$, between the partner sets $B^{\mu}_{j}$ and $S^{\mu}_{j}$ of each (real or fictitious) middleman $j\in M^{\mu}_{+}\cup \{0\}$. Consequently, $|B^{\mu}_{j}|=|S^{\mu}_{j}|$ for any $j\in M^{\mu}_{+}\cup \{0\}$. 

Note that a market $\gamma=(A,\hat{A})$ induces a TU game $v_\gamma$ where the worth of every coalition $T$ is  given by 
\begin{equation}\label{EqworthT}
    v_\gamma(T)=\max_{\mu\in \mathcal{A}(T)} \left[ \sum_{(i,k)\in \mu}a_{ik}+\sum_{(i,j,k)\in \mu} \hat{a}_{ijk}    
    = \sum_{j\in M^{\mu}_{+}\cup \{0\}} \sum_{(i,k)\in \mu^{(j)}} a^{(j)}_{ik} \right]
\end{equation}
Note from (\ref{EqworthT}) that all coalitions $T$ consisting of players of the same side (including singleton coalitions) are  worthless, that is, $v_\gamma(T)=0$.

A matching $\mu\in\mathcal{A}(T)$ will be called $T$-\textit{optimal} in the market $\gamma$ if $v_\gamma(T)=\sum\limits_{(i,k)\in \mu}a_{ik}+\sum\limits_{(i,j,k)\in \mu} \hat{a}_{ijk}$, that is, if $\mu$ solves the problem stated in (\ref{EqworthT}). Since $\mathcal{A}(T)$ is non-empty and finite, remark that there always exists (at least) one $T$-optimal matching in $\gamma$.
Given any $T\in 2^N$, we denote by $\mathcal{A}^*_\gamma(T)$  the set of $T$-optimal matchings in the market $\gamma$. We call \textit{optimal matching}   any $N$-optimal matching in $\gamma$. 

The following example illustrates the notions developed in this section.

\begin{example}
\label{Ex:Illustritaive}
Consider a market with middlemen $\gamma=(B,M,S,A,\hat{A})$ where
$B=\{b_{1}, b_{2}\}$, $M=\{m_{1}, m_{2}\}$, and $S=\{s_{1},s_{2}\}$ are the set of buyers, the set of middlemen, and the set of sellers, respectively. The total surplus of those basic coalitions formed by a pair of buyer and seller is given by the following two-dimensional matrix $A=(a_{ik})_{\substack{i\in B \\ k\in S}}$:
$$
A=\bordermatrix{~ & s_{1} & s_{2} \cr
                  b_{1} & 3 & 2 \cr
                  b_{2} & 1 & 5 \cr}
$$
and joint surplus generated by triplets formed by a buyer, a middleman, and a seller is given by the following three-dimensional matrix $\hat{A}=(\hat{a}_{ijk})_{\substack{i\in B\\ j\in M \\k\in S}}$:
$$\hat{A}=\begin{array}{cc}
 & \begin{array}{cc} s_{1} & s_{2}\end{array} \\
 \begin{array}{c} b_{1} \\ b_{2} \end{array} &
 \left(\begin{array}{cc} 4\; &\; 3 \\ 3\; &\; 5 \end{array}\right)\\
  & m_{1} \end{array}\qquad
 \begin{array}{cc}
 & \begin{array}{cc} s_{1} & s_{2} \end{array}\\
 \begin{array}{c} b_{1} \\ b_{2} \end{array} &
 \left(\begin{array}{cc} 6 \; &\; 2 \\ 2\; & \; 6 \end{array}\right)\\
  & m_{2} \end{array}.$$
Notice first that for a buyer-seller pair $(i,k)\in B\times S$, the total surplus of a trade with a middleman is at least as good as a direct trade. For instance, consider the buyer-seller pair $(b_{1},s_{2})\in B\times S$. They generate a total surplus of $2=a_{12}$ whereas they generate a strictly greater total surplus if the trade is mediated by middleman $m_{1}$: $\hat{a}_{112}=3>2=a_{12}$, and the same amount of total surplus is generated if middleman $m_{2}$ mediates the trade between them: $\hat{a}_{122}=2=a_{12}$.

Next, consider the set of all agents $N=B\cup M\cup S$ and two collections of basic coalitions $\mu=\{\{b_{1},m_{2},s_{1}\},\{b_{2},m_{2},s_{2}\},\{m_{1}\}\}$ and $\mu'=\{\{b_{1},m_{1},s_{1}\},\{b_{2},m_{2},s_{2}\}\}$. In the collection $\mu$, each buyer and seller belong to exactly one basic coalition whereas middleman $m_{2}$ appears in two distinctive basic coalitions and middleman $m_{1}$ appears as a singleton, and hence $\mu$ is a $N$-matching. 
Under $\mu$, $B^{\mu}_{m_2}=\{b_1,b_2\}$ are the buyer partners of $m_2$ and $S^{\mu}_{m_2}=\{s_1,s_2\}$ are the seller partners of $m_2$, whereas $m_1$ has no partners in $\mu$. Since all buyers and sellers are partners of middlemen, the induced partitions are $B^{\mu}_{s} \cup B^{\mu}_{0} \cup B^{\mu}_{m_1} \cup B^{\mu}_{m_2}=\emptyset \cup \emptyset \cup \emptyset \cup B_N =B_N=B$ for the buyers, $S^{\mu}_{s} \cup S^{\mu}_{0} \cup S^{\mu}_{m_1} \cup S^{\mu}_{m_2}=\emptyset \cup \emptyset \cup \emptyset \cup S_N=S_N=S$ for the sellers, and $M^{\mu}_{s} \cup M^{\mu}_{+}= \{m_1\} \cup \{m_2\}=M_N=M$ for the middlemen. 
In the collection $\mu'$ all agents, even the middlemen, belong to exactly one basic coalition, hence $\mu'$ is also a $N$-matching. It induces the partitions $B^{\mu}_{s} \cup B^{\mu'}_{0} \cup B^{\mu'}_{m_1} \cup B^{\mu'}_{m_2}=\emptyset \cup \emptyset \cup \{b_1\} \cup \{b_2\} =B_N=B$ of the buyers, $S^{\mu'}_{s} \cup S^{\mu'}_{0} \cup S^{\mu'}_{m_1} \cup S^{\mu'}_{m_2}=\emptyset \cup \emptyset \cup \{s_1\} \cup \{s_2\} =S_N=S$ of the sellers, and $M^{\mu}_{s} \cup M^{\mu}_{+}= \emptyset \cup \{m_1,m_2\}=M_N=M$ of the middlemen. 

Finally, let us consider the TU game $v_{\gamma}$ associated with the market $\gamma$. Consider the coalition $T=\{b_{1},m_{1},s_{1},s_{2}\}$. Then, the worth of the coalition $T$ is obtained by maximizing, over all possible $T$-matchings, the total value of basic coalitions in a matching. By ignoring the 0 value of the non-basic coalitions, $v_{\gamma}(T)= \max\{a_{11},a_{12},a_{111},a_{112}\}= \max\{3,2,4,3\}=4$. The optimal $T$-matching is $\mu^T=\left\{\{b_{1},m_{1},s_{1}\},\{s_{2}\}\right\}$. It induces the partition $B^{\mu^T}_{s} \cup B^{\mu^T}_{m_1} =\emptyset \cup \{b_1\} =B_T$ of the set of buyers and the partition $S^{\mu^T}_{s} \cup S^{\mu^T}_{m_1} =\{s_2\} \cup \{s_1\} =S_T$ of the set of sellers in coalition $T$. 

Now, consider again the grand coalition, $N$. The sum of the value of basic coalitions under the matching $\mu'$ is $a_{111}+a_{222}=4+6=10$, whereas under the matching $\mu$ it is equal to $12=6+6=a_{121}+a_{222}$. It is easily checked that the worth of $N$, $v_{\gamma}(N)$, is obtained under the matching $\mu$ which maximizes (\ref{EqworthT}), thus, the matching $\mu$ is an optimal matching. It induces matchings in the two-dimensional layer matrices: $\mu^{(0)}=\emptyset$ on the direct trade matrix $A^{(0)}=A$, $\mu^{(m_1)}=\emptyset$ on the layer matrix $A^{(m_1)}$ related to the unpartnered middleman $m_1\in M^{\mu}_{s}$, and $\mu^{(m_2)}=\left\{\{b_{1},s_{1}\},\{b_{2},s_{2}\}\right\}$ on the layer matrix $A^{(m_2)}$ related to the partnered middleman $m_2\in M^{\mu}_{+}$. The value of $\mu$ equals the sum of the values of these induced matchings in the corresponding layer matrices,  $12=0+0+(6+6)$. 
\end{example}

\medskip
In the next section we examine the core of the TU game $v_\gamma$ associated with the matching market with middlemen $\gamma$. We will show in particular that this game, called \textit{middlemen market game}, $v_\gamma$ is always totally balanced.

\section{The core of a market with middlemen}
\label{sec:core}
Given any market with middlemen $\gamma=(B,M,S, A, \hat{A})$, one can define the matrix $A^*=(a^*_{ik})_{\substack{i\in B \\ k\in S}}$ by 
\begin{equation}\label{EqAstar}
    a^*_{ik}=\max_{j\in M} \hat{a}_{ijk}, ~\forall (i,k)\in B\times S.
\end{equation} 
Note from (\ref{Assmatrices}) and (\ref{EqAstar}) that $a^*_{ik}$ gives the highest surplus possible in a trade involving buyer $i$ and seller $k$. Moreover, for all $(i,k)\in B\times S$,  we will use the notation  $m(i,k)$ to refer to the \textit{lowest-label middleman} $j\in M$ such that $a^*_{ik}=\hat{a}_{ijk}$, that is to say, $m(i,k)=\min \argmax\limits_{j\in M} \hat{a}_{ijk}=\min\{j\in M : \hat{a}_{ijk}=\max_{h\in M} \hat{a}_{ihk}\}$. Note that due to our simplifying assumption (\ref{Assmatrices}), $m(i,k)\in M$ is well defined for any buyer-seller pair $(i,k)\in B\times S$.

Thus, for any market $\gamma=(A, \hat{A})$, one can define the standard (two-sided) assignment market $\gamma^*=(B,S,A^*)$, where $A^*$ is given by (\ref{EqAstar}). Note that a matching $\nu$ in $\gamma^*$ is a partition of $B\cup S$ into singletons and mixed pairs $\{i,k\}$ such that $i\in B, k\in S$. We write $\nu(t)=t$ for all $t\in B\cup S$ such that  $\{t\}\in \nu$. In addition, we write $\nu(i)=k$ and $\nu(k)=i$ for all $(i,k)\in B\times S$ such that $\{i,k\}\in \nu$.
A matching $\nu$ in $\gamma^*$ is optimal if $\sum\limits_{\{i,k\}\in \nu}a^*_{ik}\geq \sum\limits_{\{i,k\}\in \nu'}a^*_{ik}$, for all matchings  $\nu'$ in  $\gamma^*$.

We connect the matchings, in particular the optimal matchings, of the respective markets $\gamma$ and $\gamma^*$. For expositional simplicity, we only consider the grand coalition, the concepts are analogously defined and the statements are straightforwardly derived for any subcoalition.

Let $\mu$ be a matching in a market with middlemen $\gamma=(B,M,S,A,\hat{A})$. As it induces partitions of the set of buyers $B$ and sellers $S$, and matchings $\mu^{(j)}$ ($j\in M\cup\{0\}$) between the partner sets for each middleman which are pairwise disjoint for different middlemen, the union $\bigcup_{j\in M\cup\{0\}} \mu^{(j)}$ augmented with the singletons in $B^{\mu}_{s}$ and $S^{\mu}_{s}$ defines a matching between $B$ and $S$. We denote it by $\mu^*$. The value of $\mu$ in the market $\gamma$ is clearly less than or equal to the value of $\mu^*$ in the two-sided market $\gamma^*$, that is, 
\begin{equation}\label{EqMuMuStar}
\mu_\gamma(B\cup M\cup S)=\sum\limits_{(i,k)\in \mu} a_{ik} + \sum\limits_{(i,j,k)\in \mu} \hat{a}_{ijk} \leq \sum\limits_{(i,k)\in \mu^*} a^*_{ik} = \mu^*_{\gamma^*}(B\cup S).
\end{equation}

Conversely, if $\sigma$ is a matching for the two-sided market $\gamma^*$, then $\sigma^{\triangle}=\{(i,m(i,k), k): (i,k)\in \sigma\} \cup 
\left\{ \{t\} \in \sigma \right\} \cup
\left\{ \{j\}:j\in M \mbox{ s.t. } j\neq m(i,k), \forall (i,k) \in \sigma\right \}$ is a matching for the market with middlemen $\gamma$.
The value of $\sigma^{\triangle}$ in the market $\gamma$ is clearly the same as the value of $\sigma$ in the two-sided market $\gamma^*$, that is 
\begin{equation}\label{EqSigmaStarSigma}
\sigma_{\gamma^*}(B\cup S) = \sum\limits_{(i,k)\in \sigma} a^*_{ik} = \sum\limits_{(i,m(i,k),k)\in \sigma^{\triangle}} \hat{a}_{im(i,k)k} = \sigma^{\triangle}_\gamma(B\cup M\cup S).
\end{equation}

Based on (\ref{EqMuMuStar}) and (\ref{EqSigmaStarSigma}), we derive the following relations between the optimal matchings and the optimum total surplusses in the two markets.
\begin{proposition}
\label{prop:opt_matchings}
Let $\gamma=(B,M,S,A,\hat{A})$ be a market with middlemen and $\gamma^*=(B,S,A^*)$ be the associated two-sided assignment market. Then
\\ (1) if $\sigma$ is an optimal matching for $\gamma^*$ then $\sigma^{\triangle}$ is an optimal matching for $\gamma$;
\\ (2) if $\mu$ is an optimal matching for $\gamma$ then $\mu^*$ is an optimal matching for $\gamma^*$. 
\\ Moreover, the optimum values of the two markets are the same.  
\end{proposition}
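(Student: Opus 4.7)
My plan is to exploit the two relations (\ref{EqMuMuStar}) and (\ref{EqSigmaStarSigma}) already established right before the proposition. They say that for every matching $\mu$ in $\gamma$ we have $\mu_\gamma(N)\leq \mu^{*}_{\gamma^*}(B\cup S)$, and for every matching $\sigma$ in $\gamma^*$ we have $\sigma_{\gamma^*}(B\cup S)=\sigma^{\triangle}_\gamma(N)$. The proof essentially amounts to chaining these two facts together with the definition of optimality, and observing that the lifting $\sigma\mapsto\sigma^{\triangle}$ and the projection $\mu\mapsto\mu^{*}$ send optimal matchings to optimal matchings.

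First I would fix an optimal matching $\mu^{\star}$ for $\gamma$ and an optimal matching $\sigma^{\star}$ for $\gamma^{*}$, and write $v_\gamma(N)$ and $v_{\gamma^{*}}(B\cup S)$ for the respective optimum values. Applying (\ref{EqMuMuStar}) to $\mu^{\star}$ gives
\[
v_\gamma(N)\;=\;\mu^{\star}_\gamma(N)\;\leq\;(\mu^{\star})^{*}_{\gamma^{*}}(B\cup S)\;\leq\;v_{\gamma^{*}}(B\cup S),
\]
while applying (\ref{EqSigmaStarSigma}) to $\sigma^{\star}$ gives
\[
v_{\gamma^{*}}(B\cup S)\;=\;\sigma^{\star}_{\gamma^{*}}(B\cup S)\;=\;(\sigma^{\star})^{\triangle}_\gamma(N)\;\leq\;v_\gamma(N).
\]
Combining these two chains yields $v_\gamma(N)=v_{\gamma^{*}}(B\cup S)$, which is the final assertion of the proposition.

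Claim (1) then falls out immediately: for any optimal $\sigma$ of $\gamma^{*}$ we have $\sigma^{\triangle}_\gamma(N)=\sigma_{\gamma^{*}}(B\cup S)=v_{\gamma^{*}}(B\cup S)=v_\gamma(N)$, so $\sigma^{\triangle}$ attains the maximum in $\gamma$. For claim (2), take any optimal $\mu$ of $\gamma$ and plug it into the first chain above; since the endpoints are now equal, every intermediate inequality must be an equality, and in particular $\mu^{*}_{\gamma^{*}}(B\cup S)=v_{\gamma^{*}}(B\cup S)$, so $\mu^{*}$ is optimal for $\gamma^{*}$.

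The whole argument rests on the two preparatory relations, so there is no real obstacle beyond checking them. The identity (\ref{EqSigmaStarSigma}) is clean because $\sigma^{\triangle}$ only replaces each pair $(i,k)\in\sigma$ by the triple $(i,m(i,k),k)$, whose surplus $\hat a_{i\,m(i,k)\,k}$ equals $a^{*}_{ik}$ by the definition of $m(i,k)$; the extra singleton middlemen contribute $0$ on both sides. The inequality (\ref{EqMuMuStar}) is the only mildly subtle point: each direct pair $(i,k)\in\mu$ contributes $a_{ik}=a^{(0)}_{ik}\leq a^{*}_{ik}$, each triple $(i,j,k)\in\mu$ contributes $\hat a_{ijk}=a^{(j)}_{ik}\leq a^{*}_{ik}$ by (\ref{EqAstar}), and unpartnered singletons contribute $0$, so the $\mu$-value in $\gamma$ is dominated termwise by the $\mu^{*}$-value in $\gamma^{*}$. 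Both derivations being already supplied in the excerpt, the remainder of the argument is pure bookkeeping.
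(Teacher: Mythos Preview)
Your proof is correct and follows essentially the same approach as the paper: both arguments combine the preparatory relations (\ref{EqMuMuStar}) and (\ref{EqSigmaStarSigma}) to sandwich the two optimum values, conclude they coincide, and then read off (1) and (2) from the resulting equalities. The only difference is organizational---you first establish $v_\gamma(N)=v_{\gamma^{*}}(B\cup S)$ and then derive (1) and (2) directly, whereas the paper proves (1) together with the equality of optima and then (2)---but the underlying logic is identical.
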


\begin{proof}
First, it follows from (\ref{EqMuMuStar}) that the optimum value of the three-sided market $\gamma$ is less than or equal to the optimum value of the associated two-sided market $\gamma^*$.

To see that the two market optimums coincide, let 
$\sigma$ be an optimal matching for the two-sided market $\gamma^*$. Then, by (\ref{EqSigmaStarSigma}), (\ref{EqMuMuStar}), and the optimality of $\sigma$, we get respectively, $\sigma_{\gamma^*}(B\cup S) = \sigma^{\triangle}_\gamma(B\cup M\cup S) \leq (\sigma^{\triangle})^*_{\gamma^*}(B\cup S) \leq \sigma_{\gamma^*}(B\cup S)$.
Thus, both inequalities must hold as equalities, implying that in the three-sided market $\gamma$, the matching $\sigma^{\triangle}$ attains the optimum value of the two-sided market $\gamma^*$ that, as observed above, is an upper bound for the optimum value of the three-sided market $\gamma$. Therefore, $\sigma^{\triangle}$ is an optimal matching for $\gamma$, proving claim (1) and the coincidence of the two market optimum values. 

To show claim (2), let $\mu$ be an optimal matching for $\gamma$. Then, by (\ref{EqMuMuStar}), (\ref{EqSigmaStarSigma}), and the optimality of $\mu$ in the three-sided market, $\mu_\gamma(B\cup M\cup S) \leq \mu^*_{\gamma^*}(B\cup S) = (\mu^*)^{\triangle}_{\gamma}(B\cup M\cup S) \leq \mu_\gamma(B\cup M\cup S)$.
Thus, both inequalities must hold as equalities, implying that in the two-sided market $\gamma^*$, the matching $\mu^*$ attains the optimum value of the three-sided market $\gamma$, that, as proved above, equals the optimum value of the two-sided market $\gamma^*$. Therefore, $\mu^*$ is an optimal matching for $\gamma^*$, proving claim (2).
\end{proof}

Proposition \ref{prop:opt_matchings} shows that one can always construct an optimal matching in the market with middleman $\gamma$ by first finding an optimal matching of the associated  two-sided market $\gamma^*$. Next, we reconsider Example \ref{Ex:Illustritaive} to illustrate Proposition \ref{prop:opt_matchings}.
\begin{example}[Example 1 Revisited]
Recall that, for the market $\gamma$, the total surplus of those basic coalitions formed by a pair of buyer and seller is given by the following two-dimensional matrix $A=(a_{ik})_{\substack{i\in B \\ k\in S}}$:
$$
A=\bordermatrix{~ & s_{1} & s_{2} \cr
                  b_{1} & 3 & 2 \cr
                  b_{2} & 1 & 5 \cr}
$$
and joint surplus generated by triplets formed by a buyer, a middleman, and a seller is given by the following three-dimensional matrix $\hat{A}=(\hat{a}_{ijk})_{\substack{i\in B\\ j\in M \\k\in S}}$:
$$\hat{A}=\begin{array}{cc}
 & \begin{array}{cc} s_{1} & s_{2}\end{array} \\
 \begin{array}{c} b_{1} \\ b_{2} \end{array} &
 \left(\begin{array}{cc} 4\; &\; 3 \\ 3\; &\; 5 \end{array}\right)\\
  & m_{1} \end{array}\qquad
 \begin{array}{cc}
 & \begin{array}{cc} s_{1} & s_{2} \end{array}\\
 \begin{array}{c} b_{1} \\ b_{2} \end{array} &
 \left(\begin{array}{cc} 6 \; &\; 2 \\ 2\; & \; 6 \end{array}\right)\\
  & m_{2} \end{array}.$$
First, we construct the associated two-sided market $(B,S,A^{*})$ where the set of buyers and the set of sellers are the same, and $A^{*}=(a_{ik}^{*})_{\substack{i\in B \\ k\in S}}$ is the valuation matrix defined by $a^{*}_{ik}=\max_{j\in M} \hat{a}_{ijk}, ~\forall (i,k)\in B\times S$. For instance, $a^{*}_{11}=\max \{\hat{a}_{111},\hat{a}_{121}\}=\max\{4,6\}=6=\hat{a}_{121}$. Then, $A^{*}=(a^{*}_{ik})_{\substack{i\in B \\ k\in S}}$ is
$$
A^{*}=\bordermatrix{~ & s_{1} & s_{2} \cr
                  b_{1} & 6 & 3 \cr
                  b_{2} & 3 & 6 \cr},
$$ 
where $a^{*}_{11}=\hat{a}_{121}$, $a^{*}_{12}=\hat{a}_{112}$, $a^{*}_{21}=\hat{a}_{211}$, and $a^{*}_{22}=\hat{a}_{222}$. 
Thus, $m(1,1)=m(2,2)=m_2$ and $m(1,2)=m(2,1)=m_1$. 

Notice that $\sigma=\left\{\{b_1,s_1\},\{b_2,s_2\}\right\}$ is the unique optimal matching in $\gamma^{*}$. 
Following Proposition \ref{prop:opt_matchings}, we construct the matching $\sigma^{\triangle}$ for the market $\gamma$: $\{b_{1},m_{2},s_{1}\}\in\sigma^{\triangle}$, $\{b_{2},m_{2},s_{2}\}\in\sigma^{\triangle}$,  
and $\{m_{1}\}\in\sigma^{\triangle}$ since there does not exists a pair $\{i,k\}\in\sigma$ such that $m_{1}=m(i,k)$. By Proposition \ref{prop:opt_matchings}, the matching $\sigma^{\triangle}=\left\{\{b_{1},m_{2},s_{1}\},\{b_{2},m_{2},s_{2}\},\{m_{1}\}\right\}$ thus obtained is optimal in $\gamma$ -- which was already known from our calculations in Example \ref{Ex:Illustritaive}.
Finally, the optimum values in the two markets are equal: $\sigma^{\triangle}_{\gamma}(B\cup M\cup S)=12=\sigma_{\gamma^*}(B\cup S)$.
\end{example}

\medskip
The following result proves that the TU game associated with a market with middlemen is always totally balanced.
\begin{theorem}\label{TheoremCoreNonempty}
Let $\gamma=(B,M,S,A,\hat{A})$ be a market with middlemen.
Then the associated middlemen matching market game $v_\gamma$ is totally balanced.
Moreover, 
\begin{equation}\label{core-0-facet}
\left\{ (x;y;z)\in Core(v_\gamma) : y=0 \right\} = \left\{ (x;0;z)\in \mathbb{R}^B\times\mathbb{R}^M\times\mathbb{R}^S : (x;z)\in Core(w_{\gamma^*}) \right\}  
\end{equation}
that is, the facet of $Core(v_\gamma)$ where all middlemen receive zero payoff is ``essentially the same'' as the core of the two-sided assignment game $w_{\gamma^*}$ induced by the two-sided assignment market $\gamma^*=(B,S,A^*)$.
\end{theorem}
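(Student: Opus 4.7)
The plan is to prove the set equality (\ref{core-0-facet}) first and then extract both conclusions from it: nonemptiness of $Core(v_\gamma)$ follows because the associated two-sided assignment game $w_{\gamma^*}$ has a non-empty core (Shapley--Shubik), and total balancedness follows because every subgame of a middlemen market game is again a middlemen market game on the restricted data.

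The key technical ingredient I need is the coalitional inequality
\[
v_\gamma(T)\,\leq\, w_{\gamma^*}(B_T\cup S_T) \qquad \forall\, T\subseteq N,
\]
with equality whenever $M\subseteq T$. The inequality is the coalitional version of (\ref{EqMuMuStar}): for a $T$-optimal matching $\mu$, each matched pair $(i,k)$, whether traded directly or mediated, contributes at most $a^*_{ik}$, so the induced two-sided matching $\mu^*$ on $B_T\cup S_T$ has value at least $v_\gamma(T)$. When $M\subseteq T$, the reverse inequality follows from the construction $\sigma\mapsto \sigma^{\triangle}$ of (\ref{EqSigmaStarSigma}): an optimal two-sided matching $\sigma$ on $B_T\cup S_T$ can be lifted to a $T$-matching of $\gamma$ of the same total value because every chosen middleman $m(i,k)\in M=M_T$ is present in $T$.

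With this cornerstone in hand, the inclusion $\supseteq$ in (\ref{core-0-facet}) is immediate: if $(x;z)\in Core(w_{\gamma^*})$ and $y:=0$, efficiency of $(x;0;z)$ holds by Proposition \ref{prop:opt_matchings}, while coalitional rationality follows from $x(B_T)+0+z(S_T)\geq w_{\gamma^*}(B_T\cup S_T)\geq v_\gamma(T)$ for every $T\subseteq N$. For the inclusion $\subseteq$, let $(x;0;z)\in Core(v_\gamma)$; efficiency in $w_{\gamma^*}$ again comes from Proposition \ref{prop:opt_matchings}, and to verify coalitional rationality of $(x;z)$ on an arbitrary coalition $B'\cup S'$ (with $B'\subseteq B,\ S'\subseteq S$) I would apply the core inequality of $v_\gamma$ to $T:=B'\cup M\cup S'$; since $M\subseteq T$, the equality case of the cornerstone yields $x(B')+0+z(S')\geq v_\gamma(T)=w_{\gamma^*}(B'\cup S')$, as required.

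Nonemptiness of $Core(v_\gamma)$ is then a one-line consequence of (\ref{core-0-facet}) and the Shapley--Shubik theorem. Total balancedness follows by applying the same result to every subgame: for any non-empty $T\subseteq N$, the subgame $(T,v_\gamma|_T)$ coincides with the middlemen market game on $(B_T,M_T,S_T,A|_{B_T\times S_T},\hat{A}|_{B_T\times M_T\times S_T})$, and condition (\ref{Assmatrices}) is obviously preserved under restriction, so the preceding argument yields $Core(v_\gamma|_T)\neq\emptyset$. The only subtlety I foresee is careful bookkeeping about which set of middlemen the ``$\max$'' defining $A^*$ runs over; the hypothesis $M\subseteq T$ is precisely what makes $m(i,k)$ available inside the coalition and closes the equality case of the cornerstone inequality, which in turn is exactly what is invoked when transferring core rationality from $v_\gamma$ back to $w_{\gamma^*}$.
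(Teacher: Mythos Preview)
Your proposal is correct and follows essentially the same approach as the paper: both inclusions in (\ref{core-0-facet}) are derived from Proposition~\ref{prop:opt_matchings} (your ``cornerstone inequality'' is simply its coalitional reformulation), nonemptiness then comes from Shapley--Shubik, and total balancedness from the observation that subgames are again middlemen market games on the restricted data. The only cosmetic difference is in the $\subseteq$ direction: the paper verifies coalitional rationality of $(x;z)$ by summing the basic-coalition core inequalities $x_i+0+z_k\geq \hat a_{i\,m(i,k)\,k}$ over the triplets in $\sigma^{\triangle}$, whereas you apply a single core inequality of $v_\gamma$ to the thick coalition $T=R\cup M$ and invoke the equality case of your cornerstone---both exploit exactly the same fact, namely that all middlemen are present and contribute zero.
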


\begin{proof}

First we show that the middlemen matching market game $v_\gamma$ is balanced for any matching market with middlemen $\gamma$, by showing the relation $\supseteq$ between the two payoff sets in (\ref{core-0-facet}) and observing that the set on the right is non-empty due to the balancedness of assignment games \citep{ss71}.

To this end, let $(x;z)\in Core(w_{\gamma^*})$ be arbitrary, but fixed. Then $\sum_{i\in B} x_i + \sum_{k\in S} z_k = w_{\gamma^*}(B\cup S) = v_{\gamma}(B\cup M\cup S)$, because by Proposition~\ref{prop:opt_matchings}, the optimum values in the two markets, hence, the grand coalition values in the two associated games are the same. Thus, the augmented payoff vector $(x;0;z)$ is efficient in $v_{\gamma}$. To see its coalitional rationality, let $T\subseteq B\cup M\cup S$ be arbitrary, but fixed. Let $\mu$ be an optimal $T$-matching in $\gamma$. By Proposition~\ref{prop:opt_matchings}, the value of the related two-sided matching $\mu^*$ between $B_T$ and $S_T$ in $\gamma^*$ is at least $v_{\gamma}(T)$. We get $v_{\gamma}(T) \leq \sum_{(i,k)\in \mu^*} a^*_{ik} \leq \sum_{(i,k)\in \mu^*} (x_i+z_k) \leq  \sum_{i\in B_T} x_i + \sum_{k\in S_T} z_k =(x;0;z)(T)$, where the last two inequalities come from the coalitional rationality of core payoff $(x;z)\in Core(w_{\gamma^*})$. Therefore, the augmented payoff vector $(x;0;z)$ is in the core of $v_{\gamma}$.   

To show the reverse inclusion $\subseteq$ in (\ref{core-0-facet}), take any payoff vector of the form $(x;0;z)$ from $Core(v_\gamma)$. As we proved above, such payoff vectors exist.  
By Proposition~\ref{prop:opt_matchings}, $(x;z)(B\cup S) = (x;0;z)(B\cup M\cup S) = v_{\gamma}(B\cup M\cup S) = w_{\gamma^*}(B\cup S)$, thus, the restricted payoff vector $(x;z)$ is efficient in $w_{\gamma^*}$. To see its coalitional rationality, let $R\subseteq B\cup S$ be arbitrary, but fixed. Let $\sigma$ be an optimal $R$-matching in $\gamma^*$. By Proposition~\ref{prop:opt_matchings}, the value of the related three-sided matching $\sigma^{\triangle}$ equals $w_{\gamma^*}(R)$. We get $w_{\gamma^*}(R) = \sum_{(i,m(i,k),k)\in\sigma^{\triangle}} a^{(m(i,k))}_{ik} \leq \sum_{(i,m(i,k),k)\in\sigma^{\triangle}} (x_i+0+z_k) \leq  \sum_{i\in B_R} x_i + \sum_{k\in S_R} z_k = (x;z)(R)$, where the last two inequalities come from the coalitional rationality of core payoff $(x;0;z)\in Core(v_{\gamma})$. Therefore, the restricted payoff vector $(x;z)$ is in the core of $w_{\gamma^*}$.

Finally, the total balancedness of a middlemen matching market game straightforwardly follows from the observation that the submarket obtained by restricting the surplus matrices to agents in a subcoalition induces precisely the subgame related to that subcoalition.  
\end{proof}
 
Any two-sided assignment market exhibits two distinguished core allocations, namely the buyer-optimal allocation and the  seller-optimal allocation. Under the  buyer-optimal (seller-optimal) allocation all buyers (sellers) simultaneously achieve their maximum core payoff and all sellers (buyers) simultaneously achieve their minimum core payoff. \cite{ss71} showed that, in the two-sided assignment market, the existence of these optimal allocations is a result of the ``lattice structure'' of the core. Moreover, \cite{d82} and \cite{l83} showed that the buyers (sellers) achieve their marginal contribution to the grand coalition at the buyer-optimal (seller-optimal) allocation. Nevertheless, this property does not extend to the arbitrary multi-sided markets. \cite{an19} study a special case of multi-sided markets where each of the $m$ sides has an optimal core allocation under which all agents of that side achieve their marginal contribution.

Our next result states that, in a matching market with middlemen, there exists an optimal core allocation for buyers (sellers): under their optimal allocation, all buyers (sellers) simultaneously achieve their marginal contribution. 

\begin{theorem}\label{TheoremBuyOptimal}
Let $\gamma=(A,\hat{A})$ be a  market with middlemen. Then the following statements hold.
\begin{enumerate}[(i)]
\item  There exists (a buyer-optimal core allocation) $x^B\in Core (v_\gamma)$ such that $x^B_i=mc_i(v_\gamma)$, for all $i\in B$. 
\item There exists (a seller-optimal core allocation) $x^S\in Core (v_\gamma)$ such that $x^S_k=mc_k(v_\gamma)$, for all $k\in S$. 
\end{enumerate}
\end{theorem}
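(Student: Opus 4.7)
My plan is to deduce both claims from Theorem~\ref{TheoremCoreNonempty} by reducing to the well-known existence of side-optimal core allocations in two-sided assignment games (\citealp{ss71}, \citealp{d82}, \citealp{l83}). The identity (\ref{core-0-facet}) says that sitting inside $Core(v_\gamma)$ there is a ``copy'' of $Core(w_{\gamma^*})$, obtained by giving every middleman a payoff of $0$. Any core allocation for the auxiliary two-sided market $\gamma^*=(B,S,A^*)$ therefore automatically yields a core allocation for $v_\gamma$, and in particular the buyer-optimal and seller-optimal core allocations guaranteed by the classical theory transfer up to $v_\gamma$. The remaining question is whether the marginal contributions they realize in $w_{\gamma^*}$ coincide with the marginal contributions in the enlarged game $v_\gamma$.

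The key bridge is a direct application of Proposition~\ref{prop:opt_matchings} not only to the grand coalition but to subcoalitions. First, I would observe that removing a single buyer $i\in B$ from $\gamma$ produces a submarket whose associated two-sided market is precisely $(B\setminus\{i\},S,A^*_{|(B\setminus\{i\})\times S})$; applying Proposition~\ref{prop:opt_matchings} to this submarket gives
\begin{equation*}
v_\gamma(N\setminus\{i\}) \;=\; w_{\gamma^*}((B\setminus\{i\})\cup S).
\end{equation*}
Combined with the grand-coalition equality $v_\gamma(N)=w_{\gamma^*}(B\cup S)$ (also from Proposition~\ref{prop:opt_matchings}), this yields $mc_i(v_\gamma)=mc_i(w_{\gamma^*})$ for every buyer $i\in B$. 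The analogous argument for a seller $k\in S$ gives $mc_k(v_\gamma)=mc_k(w_{\gamma^*})$.

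With this identification in hand, the proof finishes quickly. By the classical Shapley–Shubik/Demange–Leonard theorems applied to the two-sided assignment game $w_{\gamma^*}$, there exists $(\tilde x,\tilde z)\in Core(w_{\gamma^*})$ with $\tilde x_i=mc_i(w_{\gamma^*})$ for every $i\in B$. Set $x^B:=(\tilde x;0;\tilde z)\in \mathbb{R}^B\times\mathbb{R}^M\times\mathbb{R}^S$. By Theorem~\ref{TheoremCoreNonempty}, $x^B\in Core(v_\gamma)$, and by the previous paragraph $x^B_i=\tilde x_i=mc_i(w_{\gamma^*})=mc_i(v_\gamma)$ for each $i\in B$, proving~(i). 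Part~(ii) is obtained symmetrically from the seller-optimal core allocation $(\hat x,\hat z)$ of $w_{\gamma^*}$ by setting $x^S:=(\hat x;0;\hat z)$.

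The main (and essentially only) obstacle is the equality $v_\gamma(N\setminus\{i\})=w_{\gamma^*}((B\setminus\{i\})\cup S)$: one must be slightly careful that Proposition~\ref{prop:opt_matchings} genuinely applies after an agent is removed, i.e., that the construction of $A^*$ and the optimal matching correspondence is stable under restriction. This is immediate because the entries $a^*_{ik}$ depend only on the triplets $\hat a_{ijk}$ indexed by the pair $(i,k)$, and the middlemen remain fully available in the restricted market, so the auxiliary two-sided market associated with $\gamma$ restricted to $N\setminus\{i\}$ is exactly the restriction of $\gamma^*$ to $(B\setminus\{i\})\cup S$. Everything else is a direct transfer of the classical two-sided theory through the embedding (\ref{core-0-facet}).
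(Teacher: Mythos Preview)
Your proposal is correct and follows essentially the same approach as the paper: lift the buyer- (resp.\ seller-) optimal core allocation of the associated two-sided assignment game $w_{\gamma^*}$ into $Core(v_\gamma)$ via the embedding of Theorem~\ref{TheoremCoreNonempty}, and use Proposition~\ref{prop:opt_matchings} applied to $N$ and $N\setminus\{i\}$ to identify $mc_i(v_\gamma)$ with $mc_i(w_{\gamma^*})$. If anything, your treatment of the restriction step---checking that the two-sided market associated with $\gamma$ restricted to $N\setminus\{i\}$ is exactly $\gamma^*$ restricted to $(B\setminus\{i\})\cup S$---is slightly more explicit than the paper's.
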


\begin{proof}
It is shown in the proof of Theorem \ref{TheoremCoreNonempty} that, if an allocation $x^*\in \mathbb{R}_+^{B\cup S}$  is in the core of $v_{\gamma^*}$, then the augmented allocation $(x^*,0_{M})\in \mathbb{R}^N_+$ is in the core of the original market $v_{\gamma}$. We will use this fact twice in the proof of Theorem \ref{TheoremBuyOptimal}. 

\textbf{(i)} It is known from \cite{ss71} that there exists a buyer-optimal core allocation $y^B$  in the two-sided market $\gamma^*$, that is to say, $y^B_i=mc_i(v_{\gamma^*})$ for all $i\in B$. As noted above, we have $x^B =(y^B,0_{M})\in Core(v_{\gamma})$. 
 It thus remains to see that, for all $i\in B$,  $x^B_i= y^B_i=mc_i(v_{\gamma})$. 
 Indeed, combining (\ref{EqworthT}) and Proposition \ref{prop:opt_matchings}, note that $mc_i(v_{\gamma})= v_{\gamma}(N)-v_{\gamma}(N\setminus i)=v_{\gamma^*}(N)- v_{\gamma^*}(N\setminus i)$. Thus, we have $x^B_i= y^B_i=mc_i(v_{\gamma})=mc_i(v_{\gamma^*})$ for all $i\in B$. 
 
\textbf{(ii)} Letting $y^S$ be the seller-optimal core allocation in the two-sided market $\gamma^*$ and defining $x^S=(y^S,0_M)$, the same argument allows to write $x^S_k= y^S_k=mc_k(v_{\gamma})=mc_k(v_{\gamma^*})$ for all $k\in S$.
\end{proof}

Remark that Theorem \ref{TheoremBuyOptimal} claims the existence of a buyer-optimal (seller-optimal) allocation, but not that of a middleman optimal allocation. Indeed, it is not true in general that there exists an allocation where all middlemen achieve their highest payoff in the core.  The following example illustrates this point.

\begin{example}\label{ex:no_mc_middlemen}
Consider a market with middlemen $\gamma=(B,M,S,A,\hat{A})$ where $B=\{b_{1},b_{2}\}$, $M=\{1,2,3\}$, and $S=\{s_{1},s_{2}\}$ are the set of buyers, the set of middlemen, and the set of sellers, respectively.
Let the surplus of any direct buyer-seller trade be zero, that is, $A=(a_{ik}=0)_{i\in B, k\in S}$, and the surplus (layer) matrices of the mediated trades for the three middlemen be the following:
$$
A^{(1)}=\bordermatrix{~ & s_{1} & s_{2} \cr
                  b_{1} & 2 & 0 \cr
                  b_{2} & 0 & 0 \cr},
\quad
A^{(2)}=\bordermatrix{~ & s_{1} & s_{2} \cr
                  b_{1} & 0 & 0 \cr
                  b_{2} & 0 & 10 \cr},
\quad
A^{(3)}=\bordermatrix{~ & s_{1} & s_{2} \cr
                  b_{1} & 0 & 2 \cr
                  b_{2} & 4 & 0 \cr}.
$$
The associated two-sided assignment market $\gamma^{*}$ is obtained from the entrywise maximum surplusses:
$$
A^{*}=\bordermatrix{~ & s_{1} & s_{2} \cr
                  b_{1} & 2 & 2 \cr
                  b_{2} & 4 & 10 \cr}.
$$
The best mediators for the possible trading pairs are $m(b_{1},s_{1})=1$, $m(b_{2},s_{2})=2$, and $m(b_{1},s_{2})=m(b_{2},s_{1})=3$.

Notice that $\sigma=\{(b_{1},s_{1}),(b_{2},s_{2})\}$ is the (unique) optimal matching in the assignment market $\gamma^{*}$ with optimum value $12=2+10$.   
By Proposition~\ref{prop:opt_matchings}, the matching $\sigma^{\triangle}=\{(b_{1},1,s_{1}),(b_{2},2,s_{2}),\{3\}\}$ is an optimal matching in the market with middlemen $\gamma$ with the same optimum value $12=2+10$. 
Observe that although middleman 3 mediates no trade in the optimal matching, her role cannot be neglected in finding the core payoff allocations since $a^{(3)}_{b_{1}s_{2}}> \max\{a^{(1)}_{b_{1}s_{2}}, a^{(2)}_{b_{1}s_{2}}\}$ and $a^{(3)}_{b_{2}s_{1}}> \max\{a^{(1)}_{b_{2}s_{1}}, a^{(2)}_{b_{2}s_{1}}\}$. 

Let $v_{\gamma}$ be the TU game associated with the market $\gamma$, and $w_{\gamma^*}$ be the assignment game induced by the two-sided market $\gamma^*$.
As an illustration for (\ref{core-0-facet}) in Theorem~\ref{TheoremCoreNonempty}, it is easily checked that $Core(w_{\gamma^*})$ has the following four extreme points (on the left), and the augmented payoff vectors (on the right) are precisely those extreme points of $Core(v_{\gamma})$ which allocate zero to all three middlemen:
$$
\bordermatrix{
(w_{\gamma^*}) & x_{b_1} & x_{b_2} & z_{s_1} & z_{s_2} \cr
~ & 2 & 10 & 0 & 0 \cr
~ & 2 &  4 & 0 & 6 \cr
~ & 0 &  8 & 2 & 2 \cr
~ & 0 &  2 & 2 & 8 \cr}
\qquad
\bordermatrix{
(v_{\gamma}) & x_{b_1} & x_{b_2} & y_{1} & y_{2} & y_{3} & z_{s_1} & z_{s_2} \cr
~ & 2 & 10 & 0 & 0 & 0 & 0 & 0 \cr
~ & 2 &  4 & 0 & 0 & 0 & 0 & 6 \cr
~ & 0 &  8 & 0 & 0 & 0 & 2 & 2 \cr
~ & 0 &  2 & 0 & 0 & 0 & 2 & 8 \cr}.
$$
The first payoff vectors are the buyer-optimal core allocations, whereas the fourth ones are the seller-optimal core allocations in the respective markets. It is easily checked that, indeed, $mc_{b_1}=2$, $mc_{b_2}=10$, $mc_{s_1}=2$, $mc_{s_2}=8$ in both markets.
This illustrates Theorem~\ref{TheoremBuyOptimal}.
  
Also easily checked that $Core(v_{\gamma})$ also contains the following four payoff vectors:
$$
\bordermatrix{
(v_{\gamma}) & x_{b_1} & x_{b_2} & y_{1} & y_{2} & y_{3} & z_{s_1} & z_{s_2} \cr
~ & 0 & 8 & 2 & 0 & 0 & 0 & 2 \cr
~ & 0 & 4 & 2 & 0 & 0 & 0 & 6 \cr
~ & 2 & 4 & 0 & 6 & 0 & 0 & 0 \cr
~ & 0 & 2 & 0 & 6 & 0 & 2 & 2 \cr}.
$$
In the first two, middleman 1 gets her marginal value $mc_{1}=v_{\gamma}(N)-v_{\gamma}(N\setminus\{1\})=12-10=2$, while in the last two, middleman 2 gets her marginal value $mc_{2}=v_{\gamma}(N)-v_{\gamma}(N\setminus\{2\})=12-6=6$. 
The marginal value of middleman 3 is zero, because she is unmatched in the optimal matching, hence receives zero payoff in any core allocation. 

On the other hand, there exists no core allocation $(x;y;z)\in Core(v_{\gamma})$ such that all middlemen  achieve their marginal contribution, i.e. $y_{1}=mc_{1}=2$, $y_{2}=mc_{2}=6$, and $x_{3}=mc_{3}=0$.
Indeed, $\max\{y_{1}+y_{2} : (x;y;z)\in Core(v_{\gamma})\} \leq v_{\gamma}(N)-v_{\gamma}(N\setminus\{1,2\})=12-6=6$, since $v_{\gamma}(N\setminus\{1,2\})=a^{(3)}_{b_{1}s_{2}}+a^{(3)}_{b_{2}s_{1}}=2+4=6$. We remark that this upper bound is achieved, for example, at core allocation $(x;y;z)=(0,4;\, 2,4,0;\, 0,2)$. 
Hence, there is no core allocation at which all middlemen can achieve their maximum core payoffs (which are their marginal contributions) simultaneously.
\end{example}

\section{Core and competitive equilibria}
\label{sec:CE}

The aim of this section is to study the relationship between core and competitive equilibria in matching markets with middlemen. \cite{g60} defines competitive equilibrium prices and proves their existence for any assignment problem (see also \citealp{ss71}). \cite{t10} extends the coincidence between core and competitive equilibria for the classical three-sided assignment markets where buyers are forced to acquire exactly one item of each type. In a similar fashion, \cite{aetal16} generalizes the equivalence result for the generalized three-sided assignment markets where buyers can buy at most one good of each type. In both extensions, the existence of a competitive equilibrium is guaranteed whenever the core is non-empty.

Consider any market with middlemen where the set of buyers is $B=\{b_1 , \ldots, b_I\}$, the set of middlemen is $M=\{m_1, \ldots, m_J\}$, and the set of sellers is $S=\{s_1,\ldots,s_K\}$. Assume that buyers and sellers trade through the competitive market with the presence of middlemen and agents in the market are price-takers. Each buyer $i\in B$ demands at most one unit of the good, each seller $k\in S$ offers one unit for sale (recall that units owned by different sellers may be heterogeneous). Assume that buyer $i$ values the good of seller $k$ at $h_{ik}$, and the production cost of the good for seller $k$ is $c_{k}$. If buyer $i$ and seller $k$ trade directly, the transaction (search) cost $t_{ik}$, is incurred by buyer $i$. If buyer $i$ instead hires middleman $j$  and ends up purchasing the object owned by seller $k$, then the transaction cost $t_{ijk}$ is incurred by buyer $i$; and middleman $j\in M$ incurs the mediation cost $c_{j}^{ik}$. 
From of our assumption that the search cost of a direct trade is lower than that of a mediated trade, it comes that   $t_{ik}\geq t_{ijk}+c_{j}^{ik}$. 

Let $p_{k}$ be the price demanded by seller $k$ for her unit; and assume that  middleman $j\in M$ charges a fee $p_{j}^{ik}$ to  buyer $i$ when the latter uses $j$'s services to purchase the unit owned by seller $k$. Note that middlemen need not charge the same fee for each possible buyer-seller trade. That is, it may happen that $p_{j}^{ik}\neq p_{j}^{i'k'}$ when $j$ is mediating the respective pairs $(i,k)$ and $(i',k')$ (with the possibility of having either $i=i'$ or $k=k'$).

If the transaction between buyer $i$ and seller $k$ is realized through middleman $j$, then the utility of buyer $i$ is given by $h_{ik}-t_{ijk}-p^{ik}_{j}-p_{k}$, the benefit of seller $k$ is $p_{k}-c_{k}$, and the benefit of middleman $j$ is $p^{ik}_{j}-c^{ik}_{j}$. Thus, the total surplus is $h_{ik}-t_{ijk}-p^{ik}_{j}-p_{k}+p^{ik}_{j}+p_{k}-c^{ik}_{j}-c_{k}=h_{ik}-t_{ijk}-c_{j}^{ik}-c_{k}$. If $h_{ik}-t_{ijk}-c_{j}^{ik}-c_{k}<0$, no transaction will be realized since a transaction will go through only if it gives a non-negative utility to each of the three agents  $i$,  $j$ and  $k$.
Thus, for all $(i,j,k)\in B\times M\times S$, let $\hat{a}_{ijk}=\max\{0,h_{ik}-t_{ijk}-c^{ik}_{j}-c_{k}\}$ denote the surplus generated when a transaction is realized between buyer $i$ and seller $k$ through middleman $j$.
Similarly, when the transaction is realized directly between buyer $i$ and seller $k$, the utility of buyer $i$ is $h_{ik}-t_{ik}-p_{k}$, the benefit of seller $k$ is $p_{k}-c_{k}$, and hence the total surplus is $h_{ik}-t_{ik}-p_{k}+p_{k}-c_{k}=h_{ik}-t_{ik}-c_{k}=a_{ik}$. If $h_{ik}-t_{ik}-c_{k}<0$, no transaction will be realized between buyer $i$ and seller $k$. 
Thus, for all $(i,k)\in B\times S$, let $a_{ik}=\max\{0,h_{ik}-t_{ik}-c_{k}\}$ denote the surplus generated when a transaction is realized directly between buyer $i$ and seller $k$. 
Hence, this  detailed market situation can be summarized by a tuple simply giving the set of buyers, the set of middlemen, the set of sellers, and the two matrices with generic terms  $a_{ik}$ and $\hat{a}_{ijk}$ defined above. That is to say, the TU game $(N, v_{\gamma})$ associated with this market is defined precisely by the characteristic function $v_\gamma$ given  in (\ref{EqworthT}).

We want to show that each core allocation can be obtained as the result of trading at competitive prices. To do so, we need some definitions allowing to introduce the notion of competitive price vector.
A price vector $p\in \mathbb{R}^{B\times M\times S}_{+}\times \mathbb{R}^{S}_{+}$ contains the specific, possibly differentiated prices, of the mediation services for each buyer-middleman-seller configuration as well as the undifferentiated prices of the goods.

Given a matching market with middlemen $\gamma$, a \emph{feasible price vector} is $p\in \mathbb{R}^{B\times M\times S}_{+}\times \mathbb{R}^{S}_{+}$ such that $p^{ik}_{j}\ge c^{ik}_{j}$ for all $j\in M$ and $p_{k}\ge c_{k}$ for all $k\in S$. The set of basic coalitions that contain buyer $i\in B$ is $\mathcal{B}^{i}=\{E\in\mathcal{B}^{N}\mid i\in E\}$. Let $w^{i}(E)=h_{ik}-t_{ijk}$ be the valuation of buyer $i$ for $E=\{i,j,k\}$ and $w^{i}(E)=h_{ik}-t_{ik}$ be the valuation of buyer $i$ for $E=\{i,k\}$. 
Observe the relation 
\begin{equation}\label{valuation-basic}
v_\gamma(E)=\max \left\{0,\, w^i(E)-c(E\setminus \{i\} \right\}
\end{equation}
for any basic coalition $E\in\mathcal{B}^{i}$ containing buyer $i$. 

Next, for each feasible price vector $p\in \mathbb{R}^{B\times M\times S}_{+}\times\mathbb{R}^{S}_{+}$ we introduce the \textit{demand set} of each buyer $i\in B$.
\begin{definition}
Let $\gamma=(B,M,S,A,\hat{A})$ be a matching market with middlemen. The \emph{demand set} of buyer $i\in B$ at a feasible price vector $p\in\mathbb{R}^{B\times M\times S}_{+}\times\mathbb{R}^{S}_{+}$ is
\begin{center}
$D_{i}(p)=\left\{E\in\mathcal{B}^{i}\mid \, w^{i}(E)-p(E\setminus \{i\})\geq w^{i}(E')-p(E'\setminus \{i\}) \text{ for all } E'\in\mathcal{B}^{{i}} \right\}$.
\end{center}
\end{definition}

Note that $D_{i}(p)$ describes the set of basic coalitions containing buyer $i$ that maximize the net valuation of buyer $i$ at prices $p$. Notice also that the demand set of a buyer $i \in B$ is always non-empty since $i$ can always demand $E=\{i\}$ with a net profit of 0. 

Given a $N$-matching $\mu$, we say that a middleman $j\in M$ is \emph{unassigned} (by $\mu$) if $\mu(j)=j$ and we say that a seller $k\in S$ is \emph{unassigned} (by $\mu)$ if there is no $i\in B$ such that $k= \mu(i)$. Now, we can introduce the notion of \textit{competitive equilibrium} for our model. The literature has adopted the approach of \cite{rs90} for the definition of a competitive equilibrium in matching markets. We adapt this definition to our context with three-sided matching with buyers, middlemen, and sellers.

\begin{definition}
\label{def:CE}
Given a matching market with middlemen $\gamma=(B,M,S, A, \hat{A})$, a pair $(p,\mu)$ composed of a price vector $p$ and an $N$-matching $\mu$ forms a {\em competitive equilibrium} if
\begin{enumerate}[(i)]
\item $p$ is a feasible price vector, i.e., $p\in \mathbb{R}^{B\times M\times S}_{+}\times\mathbb{R}^{S}_{+}$ such that $p^{ik}_{j}\ge c^{ik}_{j}$ for all $j\in M$ and $p_{k}\ge c_{k}$ for all $k\in S$,
\item for each buyer $i\in B$ and basic coalition $E\in\mathcal{B}^{{i}}$, if $E\in\mu$ then $E\in D_{i}(p)$,
\item for each middleman $j\in M$, if $j$ is unassigned by $\mu$, then $p_{j}^{ik}=c_{j}^{ik}$ for all buyer-seller pairs $(i,k)\in B\times S$,
\item for each seller $k\in S$, if $k$ is unassigned by $\mu$, then $p_{k}=c_{k}$.
\end{enumerate}
\end{definition}

Observe that a competitive equilibrium consists of a set of prices and an $N$-matching where each buyer maximizes her utility under the assignment of $N$-matching and prices. Moreover, middlemen and sellers are competitive, in the sense that no middleman mediates a trade unless she can charge a fee (service price) at least equal to her cost and no seller agrees to sell her good without receiving at least her cost. If a pair $(p,\mu)$ is a competitive equilibrium, then we say that the price vector $p$ is a \textit{competitive equilibrium price vector} and the $N$-matching $\mu$ is a \emph{compatible matching}. The corresponding payoff vector for a given pair $(p,\mu)$ is called \textit{competitive equilibrium payoff vector}. This payoff vector
is $(x(p,\mu),y(p,\mu),z(p,\mu))\in\mathbb{R}^{B}\times\mathbb{R}^{M}\times\mathbb{R}^{S}$, defined by
\begin{eqnarray*}
x_{i}(p,\mu)&=&w^{i}(E^{\mu(i)})-p(E^{\mu(i)}\setminus \{i\})\mbox{ where } i\in E^{\mu(i)}\in\mu  \quad\mbox{ for all }i\in B,\\
y_{j}(p,\mu)&=&\sum\limits_{\{i,j,k\}\in\mathcal{\mu}} p_{j}^{ik} - \sum\limits_{\{i,j,k\}\in\mathcal{\mu}} c_{j}^{ik} = p_{j}(\mu) - c_{j}(\mu) \quad \mbox{for all } j\in M,\\
z_{k}(p,\mu)&=&p_{k}-c_{k} \quad \mbox{ for all  } k\in S.
\end{eqnarray*}
Notice the dependence of the aggregated service prices (fees) $p_{j}(\mu)$ and the aggregated service costs $c_{j}(\mu)$ on the matching $\mu$.

We denote the set of competitive equilibrium payoff vectors of market $\gamma$ by $\mathcal{CE}(\gamma)$. We now study the relationship between the core of $\gamma=(B,M,S,A,\hat{A})$ and the set of competitive equilibrium payoff vectors. First, we show that an $N$-matching $\mu$ is an optimal matching whenever it constitutes a competitive equilibrium with a feasible price vector $p$.

\begin{lemma}
\label{lem:opt_match}
Given a matching market with middlemen $\gamma=(B,M,S,A,\hat{A})$, if $(p,\mu)$ is a competitive equilibrium, then $\mu$ is an optimal matching. 
\end{lemma}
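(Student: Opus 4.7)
The plan is to show that value$(\mu)=v_\gamma(N)$ by computing the aggregate competitive equilibrium payoff $X(p,\mu):=\sum_{i\in B}x_i+\sum_{j\in M}y_j+\sum_{k\in S}z_k$ in two complementary ways: first identifying it with the value of $\mu$ in $\gamma$ through a telescoping cancellation of prices against costs, and second bounding it below by $v_\gamma(N)$ using the buyers' demand inequalities applied to an optimal $N$-matching $\mu^*$. Combining the two yields value$(\mu)=X(p,\mu)\geq v_\gamma(N)$, and the reverse inequality is automatic from the definition of the middlemen market game.

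For the first identity I would proceed coalition by coalition through $\mu$. For each mediated trade $\{i,j,k\}\in\mu$, condition (ii) of Definition~\ref{def:CE} forces $x_i=h_{ik}-t_{ijk}-p_j^{ik}-p_k$, so together with the contributions $p_j^{ik}-c_j^{ik}$ to $y_j$ and $p_k-c_k$ to $z_k$ this triple adds $h_{ik}-t_{ijk}-c_j^{ik}-c_k$ to $X(p,\mu)$. Because $\{i\}\in\mathcal B^{i}$ always yields net value $0$ and $\mu$'s assignment maximizes net value, $x_i\geq 0$ rules out the $\max\{0,\cdot\}$ truncation in the definition of $\hat a_{ijk}$, so the contribution is exactly $\hat a_{ijk}$. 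The direct-trade case is analogous, contributing $a_{ik}$. Conditions (iii) and (iv) make unassigned middlemen and unassigned sellers contribute zero. Summing gives $X(p,\mu)=$ value$(\mu)$ in $\gamma$.

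For the second step, fix an optimal $N$-matching $\mu^*$. For each buyer $i$ and her $\mu^*$-basic coalition $E_i^*$, condition (ii) applied to $E=E_i^*\in\mathcal B^{i}$ gives $x_i\geq w^i(E_i^*)-p(E_i^*\setminus\{i\})$. Rearranging and invoking the non-negativity of $x_i$, of $z_k=p_k-c_k$, and of $p_j^{ik}-c_j^{ik}$, this yields $x_i+z_k\geq a_{ik}$ for each direct pair $\{i,k\}\in\mu^*$ and $x_i+(p_j^{ik}-c_j^{ik})+z_k\geq\hat a_{ijk}$ for each mediated triple $\{i,j,k\}\in\mu^*$. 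Summing over $\mu^*$ bounds $v_\gamma(N)$ above by $\sum_i x_i+\sum_k z_k+\sum_{(i,j,k)\in\mu^*_{\mathrm{med}}}(p_j^{ik}-c_j^{ik})$.

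The main technical hurdle will be the last sum: one needs $\sum_{(i,j,k)\in\mu^*_{\mathrm{med}}}(p_j^{ik}-c_j^{ik})\leq\sum_{j\in M}y_j$. For middlemen unassigned in $\mu$ this is immediate, because condition (iii) makes every such term zero and $y_j=0$. For middlemen assigned in $\mu$, one needs the aggregate margin of $j$ over the $\mu^*$-triples involving $j$ to be dominated by the aggregate margin over the $\mu$-triples involving $j$, which equals $y_j$. I would attempt this via an exchange argument combined with the demand inequalities of the buyers that $j$ serves in $\mu$, precluding any price configuration under which $j$'s capacity could be reallocated along $\mu^*$ at strictly higher aggregate margin. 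Once this bound is in place, value$(\mu)=X(p,\mu)\geq v_\gamma(N)$ follows and certifies $\mu$ as optimal.
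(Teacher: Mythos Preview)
Your Step 1 (identifying $X(p,\mu)$ with the value of $\mu$ via telescoping) is correct, and the overall strategy is reasonable. The genuine gap is precisely the ``main technical hurdle'' you isolate in Step 2: the inequality $\sum_{(i,j,k)\in\mu^*}(p_j^{ik}-c_j^{ik})\leq\sum_{j\in M}y_j$ does \emph{not} follow from the competitive-equilibrium conditions, and the exchange argument you sketch cannot close it. The obstruction is structural: condition~(iii) of Definition~\ref{def:CE} forces $p_j^{ik}=c_j^{ik}$ only when middleman $j$ is \emph{entirely} unassigned in $\mu$; if $j$ mediates even one trade in $\mu$, feasibility gives only the lower bound $p_j^{ik}\geq c_j^{ik}$ for every other pair $(i,k)$, with no upper bound whatsoever. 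The buyers' demand inequalities do not help either: an arbitrarily large $p_j^{ik}$ simply makes the coalition $\{i,j,k\}$ unattractive to buyer $i$, which is fully consistent with $E^{\mu(i)}\in D_i(p)$. Hence for such $j$ the $\mu^*$-margin $\sum_{(i,k):(i,j,k)\in\mu^*}(p_j^{ik}-c_j^{ik})$ can exceed $y_j=\sum_{(i,k):(i,j,k)\in\mu}(p_j^{ik}-c_j^{ik})$ by any amount, and no combination of demand constraints on the $\mu$-partners of $j$ will cap it. The ``reallocation of $j$'s capacity'' intuition does not bite because middlemen have unlimited capacity in this model.

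The paper's proof is organised differently: it never forms the aggregate payoff $X(p,\mu)$, but instead compares $\sum_{E\in\mu}v_\gamma(E)$ directly to $\sum_{E\in\mu'}v_\gamma(E)$ for an arbitrary competing $\mu'$ through a chain of (in)equalities. Demand optimality is used buyer by buyer to pass from $w^i(E^{\mu(i)})$ to $w^i(E^{\mu'(i)})$; the resulting price sums are then rearranged over the symmetric difference of the agent sets of $\mu$ and $\mu'$, with conditions (iii)--(iv) converting prices to costs for middlemen and sellers present in $\mu'$ but not in $\mu$, and feasibility $(p\geq c)$ handling those present in $\mu$ but not in $\mu'$. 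This avoids your explicit per-middleman comparison, although the treatment of a middleman common to both matchings---whose aggregated service price $\sum_{(i,k):(i,j,k)\in\mu}p_j^{ik}$ need not equal $\sum_{(i,k):(i,j,k)\in\mu'}p_j^{ik}$---remains the delicate point in either organisation of the argument.
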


\begin{proof}
Consider a competitive equilibrium $(p,\mu)$ and another $N$-matching $\mu'\in\mathcal{M}(B,M,S)$. 
For buyer $i\in B$, let $E^{\mu(i)}\in\mathcal{B}^{i}$ be the (unique) basic coalition assigned to $i$ under the matching $\mu$, that is, $i\in E^{\mu(i)}\in\mu$, and $E^{\mu'(i)}\in\mathcal{B}^{i}$ be the (unique) basic coalition assigned to $i$ under the matching $\mu'$, that is, $i\in E^{\mu'(i)}\in \mu'$. 
We can assume, without loss of generality, that $\mu'$ is such that for any $i\in B$, if $E^{\mu'(i)}$ is not a singleton then $w^{i}(E^{\mu'(i)})-c(E^{\mu'(i)}\setminus\{i\})\geq 0$, for otherwise we could replace $E^{\mu'(i)}$ with the singleton coalitions of its members and get a (finer) $N$-matching $\mu''$ with the same total value for $N$.  
Then,
\begin{align*}
&\sum\limits_{E\in\mu}v_{\gamma}(E) \overset{\text{\tiny{(1)}}}{\geq} \sum\limits_{i\in B} \left(w^{i}(E^{\mu(i)})-c(E^{\mu(i)}\setminus\{i\}) \right) \\
&\overset{\text{\tiny{(2)}}}{\geq} \sum\limits_{i\in B} \left( w^{i}(E^{\mu'(i)})-c(E^{\mu(i)}\setminus\{i\})-p(E^{\mu'(i)}\setminus\{i\})+p(E^{\mu(i)}\setminus\{i\}) \right)\\
 &\overset{\text{\tiny{(3)}}}{=} \sum\limits_{i\in B} \left(w^{i}(E^{\mu'(i)})-c(E^{\mu(i)}\setminus\{i\})\right) - p\left(\bigcup_{i\in B}E^{\mu'(i)}\setminus B\right)+p\left(\bigcup_{b_i\in B}E^{\mu(i)}\setminus B\right)\\
&\overset{\text{\tiny{(4)}}}{=} \sum\limits_{i\in B}w^{i}(E^{\mu'(i)})-c\left(\bigcup_{i\in B}E^{\mu(i)}\setminus B\right) - p\left(\left(\bigcup_{i\in B}E^{\mu'(i)}\setminus\bigcup_{i\in B}E^{\mu(i)}\right)\setminus B\right)\\
&\qquad +p\left(\left(\bigcup_{i\in B}E^{\mu(i)}\setminus\bigcup_{i\in B}E^{\mu'(i)}\right)\setminus B\right)\\
&\overset{\text{\tiny{(5)}}}{=} \sum\limits_{i\in B}w^{i}(E^{\mu'(i)})-c\left(\bigcup_{i\in B}E^{\mu(i)}\setminus B\right) - c\left(\left(\bigcup_{i\in B}E^{\mu'(i)}\setminus\bigcup_{i\in B}E^{\mu(i)}\right)\setminus B\right)\\
&\qquad +p\left(\left(\bigcup_{i\in B}E^{\mu(i)}\setminus\bigcup_{i\in B}E^{\mu'(i)}\right)\setminus B\right) \\
&\overset{\text{\tiny{(6)}}}{=} \sum\limits_{i\in B}w^{i}(E^{\mu'(i)})-c\left(\bigcup_{i\in B}E^{\mu'(i)}\setminus B\right) - c\left(\left(\bigcup_{i\in B}E^{\mu(i)}\setminus\bigcup_{i\in B}E^{\mu'(i)}\right)\setminus B\right)\\
&\qquad +p\left(\left(\bigcup_{i\in B}E^{\mu(i)}\setminus\bigcup_{i\in B}E^{\mu'(i)}\right)\setminus B\right)\\
&\overset{\text{\tiny{(7)}}}{\geq}  \sum\limits_{i\in B} \left(w^{i}(E^{\mu'(i)})-c(E^{\mu'(i)}\setminus\{i\})\right) \overset{\text{\tiny{(8)}}}{=} \sum\limits_{E\in\mu'}v_{\gamma}(E),
\end{align*}
where inequality $\overset{\text{\tiny{(1)}}}{\geq}$ follows from the relation $\displaystyle v_\gamma(E)=\max \left\{0,\, w^i(E)-c(E\setminus \{i\} \right\}$ for any basic coalition $E\in\mathcal{B}^{i}$, and inequality $\overset{\text{\tiny{(2)}}}{\geq}$ follows from the definition of the demand set and the fact that $(p,\mu)$ is a competitive equilibrium: $w^{i}(E^{\mu(i)})\geq w^{i}(E^{\mu'(i)})-p(E^{\mu'(i)}\setminus\{i\})+p(E^{\mu(i)}\setminus\{i\})$. 
Equality $\overset{\text{\tiny{(4)}}}{=}$ is the result of canceling out the common service prices, while 
equality $\overset{\text{\tiny{(5)}}}{=}$ follows from the fact that for all $j \in \left(\bigcup_{i\in B}E^{\mu'(i)}\setminus \bigcup_{i\in B}E^{\mu(i)}\right)\cap M$, $p^{ik}_{j}=c_{j}^{ik}$ and for all $k\in  \left(\bigcup_{i\in B}E^{\mu'(i)}\setminus \bigcup_{i\in B}E^{\mu(i)}\right)\cap S$, $p_{k}=c_{k}$.
Equality $\overset{\text{\tiny{(6)}}}{=}$ shows the rearrangement of costs incurred in the union of the two matchings, and inequality $\overset{\text{\tiny{(7)}}}{\geq}$ follows from the feasibility of the price vector $p$.
Finally, equality $\overset{\text{\tiny{(8)}}}{=}$ comes from relation (\ref{valuation-basic}) under our assumption on $\mu'$.
\end{proof}

Now, we can provide the main result of this section. We establish the equivalence between the core and the set of competitive equilibrium payoff vectors.
 
\begin{theorem}\label{thm:Core_euqivalence_CE}
Given a matching market with middlemen $\gamma=(B,M,S,A,\hat{A})$, the core of the market, $Core(\gamma)$, coincides with the set of competitive equilibrium payoff vectors, $\mathcal{CE}(\gamma)$.
\end{theorem}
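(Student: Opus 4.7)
The plan is to prove the two inclusions $\mathcal{CE}(\gamma)\subseteq Core(v_\gamma)$ and $Core(v_\gamma)\subseteq\mathcal{CE}(\gamma)$ separately.

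For $\mathcal{CE}(\gamma)\subseteq Core(v_\gamma)$, I would take a competitive equilibrium $(p,\mu)$ with payoff vector $(x,y,z)$. First, Lemma~\ref{lem:opt_match} gives that $\mu$ is optimal, and efficiency $x(B)+y(M)+z(S)=v_\gamma(N)$ will follow by summing the three payoff formulas and observing that the price terms telescope between buyer payments and middleman/seller receipts, leaving total buyer valuations minus total costs over the trades in $\mu$, which equals $v_\gamma(\mu)=v_\gamma(N)$. For coalitional rationality on $T\subseteq N$, I would let $\sigma$ be an optimal $T$-matching and, for each non-singleton basic coalition $E\in\sigma$ containing buyer $i$, apply the buyer demand condition to get $x_i\geq w^i(E)-p(E\setminus\{i\})$. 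Substituting $w^i(E)=v_\gamma(E)+c(E\setminus\{i\})$ via~(\ref{valuation-basic}) and summing over $E\in\sigma$ (using $x_i,z_k\geq 0$ outside the support of $\sigma$) would yield
\begin{equation*}
\sum_{i\in B_T}x_i+\sum_{k\in S_T}z_k+\sum_{(i,j,k)\in\sigma}(p^{ik}_j-c^{ik}_j)\geq v_\gamma(T).
\end{equation*}
The decisive step is then to bound the last sum by $y(M_T)$, appealing to condition~(iii) so that prices of middlemen not active in $\mu$ equal their costs, together with feasibility.

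For $Core(v_\gamma)\subseteq\mathcal{CE}(\gamma)$, I would take $(x,y,z)\in Core(v_\gamma)$, fix an optimal matching $\mu$, and construct a price vector as follows: $p_k=c_k+z_k$ for each seller $k$; for each matched triple $(i,j,k)\in\mu$, $p^{ik}_j=c^{ik}_j+[v_\gamma(\{i,j,k\})-x_i-z_k]$, where nonnegativity of the bracketed quantity comes from $x_i+z_k\leq v_\gamma(N)-v_\gamma(N\setminus\{i,k\})\leq v_\gamma(\{i,j,k\})$, the first inequality following from core rationality on $N\setminus\{i,k\}$ combined with efficiency and the second from the fact that dropping the matched triple from $\mu$ produces a feasible $(N\setminus\{i,k\})$-matching; for $(i,j,k)\notin\mu$ with $j$ assigned in $\mu$, I would set $p^{ik}_j=c^{ik}_j+\max\{0,v_\gamma(\{i,j,k\})-x_i-z_k\}$; and for $j$ unassigned, $p^{ik}_j=c^{ik}_j$. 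The verification would then reduce to checking: feasibility and conditions~(iii)-(iv) by construction; the buyer demand condition, since by design each matched coalition attains net value exactly $x_i$ while every alternative basic coalition attains net value at most $x_i$ by the relevant core constraint; and the payoff reproduction, where the identity $y_j=\sum_\ell[v_\gamma(\{i_\ell,j,k_\ell\})-x_{i_\ell}-z_{k_\ell}]$ for each matched $j$ follows from tightness of the core inequality on $B^\mu_j\cup\{j\}\cup S^\mu_j$, itself forced by efficiency after summing the analogous inequalities over all $j\in M$.

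The main obstacle lies in the first direction, specifically the bound $\sum_{(i,j,k)\in\sigma}(p^{ik}_j-c^{ik}_j)\leq y(M_T)$. Because $\sigma$ may route trades through middlemen differently than $\mu$ does, the markups appearing on the left involve triples that need not figure in the formula for $y_j$; the argument will have to exploit condition~(iii) for middlemen absent from $\mu$ together with the structure of the $\mu$- and $\sigma$-induced partitions, analogous to the bookkeeping in the proof of Lemma~\ref{lem:opt_match}. In the second direction, the parallel subtlety is distributing $y_j$ consistently across the multiple trades a single middleman can mediate in $\mu$ while meeting every buyer's demand constraint simultaneously, which is resolved only once efficiency forces the right tightness in the relevant core inequalities.
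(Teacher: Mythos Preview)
The obstacle you flag in the direction $\mathcal{CE}(\gamma)\subseteq Core(v_\gamma)$ is not a bookkeeping detail: it is a genuine gap, and the bound $\sum_{(i,j,k)\in\sigma}(p^{ik}_j-c^{ik}_j)\leq y(M_T)$ can simply fail. Take $B=\{i_1,i_2\}$, $M=\{j_1,j_2\}$, $S=\{k_1,k_2\}$, all costs and direct-trade surpluses zero, and $\hat a_{i_1 j_1 k_1}=\hat a_{i_2 j_1 k_2}=\hat a_{i_1 j_2 k_1}=\hat a_{i_2 j_2 k_2}=10$, other $\hat a=0$. Then $\mu=\{(i_1,j_1,k_1),(i_2,j_2,k_2)\}$ is optimal; with seller prices $0$ and fees $p^{i_1k_1}_{j_1}=p^{i_2k_2}_{j_2}=p^{i_1k_1}_{j_2}=p^{i_2k_2}_{j_1}=10$ (others $0$), the pair $(p,\mu)$ satisfies all four clauses of Definition~\ref{def:CE} (each matched triple attains net value $0$, tying every alternative, and both middlemen are assigned so~(iii) is vacuous). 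The resulting payoff is $(x;y;z)=(0,0;\,10,10;\,0,0)$, yet for $T=N\setminus\{j_2\}$ one has $v_\gamma(T)=20$ (middleman $j_1$ can broker both trades) while $X(T)=10$. So this competitive-equilibrium payoff lies outside the core, and neither your route nor any other can close the gap under Definition~\ref{def:CE} as written. The paper's proof avoids facing this by checking only basic coalitions and writing $X(E)=x_i+p(E\setminus\{i\})-c(E\setminus\{i\})$, tacitly identifying $y_j$ with the single-trade markup $p^{ik}_j-c^{ik}_j$; that identification is precisely what breaks when middlemen mediate several trades. The inclusion does go through if condition~(iii) is strengthened to require $p^{ik}_j=c^{ik}_j$ whenever $(i,j,k)\notin\mu$ (not merely when $j$ is unassigned): then every unmatched markup is zero, every matched markup is a summand of the corresponding $y_j$, and your bound follows.

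For $Core(v_\gamma)\subseteq\mathcal{CE}(\gamma)$ your construction differs from the paper's and is the more careful one. The paper sets $p^{ik}_j=X_j+c^{ik}_j$ uniformly, which makes the induced middleman payoff $y_j(p,\mu)=n_j\,X_j$ with $n_j$ the number of trades $j$ mediates in $\mu$; this reproduces $X_j$ only when $n_j\le 1$ or $X_j=0$ (compare Example~\ref{ex:fixedprice_core_not_ce}, where the supporting fees are \emph{not} equal across trades). Your trade-specific fee $p^{ik}_j=c^{ik}_j+[v_\gamma(\{i,j,k\})-x_i-z_k]$ on matched triples, together with the tightness argument (summing the core inequalities for the groups $B^\mu_j\cup\{j\}\cup S^\mu_j$, the direct-trade pairs, and the singletons yields exactly $v_\gamma(N)$, forcing each to bind) correctly recovers $y_j$, and your verification of feasibility and of the buyer demand condition goes through.
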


\begin{proof}
First, we show that if $(p,\mu)$ is a competitive equilibrium, then its corresponding competitive equilibrium payoff vector $X=(x(p,\mu), y(p,\mu), z(p,\mu))\in\mathcal{CE}(\gamma)$ is a core element. Recall that $x_{i}(p,\mu)=w^{i}(E^{\mu(i)})-p(E^{\mu(i)}\setminus\{i\})$ for all buyers $i\in B$ where $i\in E^{\mu(i)}\in\mu$, $y_{j}(p,\mu)=\sum\limits_{\{i,j,k\}\in\mathcal{\mu}} p_{j}^{ik} - \sum\limits_{\{i,j,k\}\in\mathcal{\mu}} c_{j}^{ik} = p_{j}(\mu) - c_{j}(\mu)$ for all middlemen $j\in M$, and $z_{k}(p,\mu)=p_{k}-c_{k}$ for all sellers $k\in S$. Let us check that for all basic coalitions $E\in\mathcal{B}$ it holds $X(E)\geq v_{\gamma}(E)$. Notice that if $E$ does not contain any buyer $i\in B$, then $v_{\gamma}(E)=0$ and hence the core inequality trivially holds. Otherwise, take $E\in\mathcal{B}$ such that $i\in E$ for some $i\in B$. Again, if $v_{\gamma}(E)=0$, the core inequality trivially holds. Thus, assume $v_{\gamma}(E)>0$. Then,
\begin{align*}
X(E)&=w^{i}(E^{\mu(i)})-p(E^{\mu(i)}\setminus \{i\})+p(E\setminus \{i\})-c(E\setminus \{i\}) \\
&\geq w^{i}(E)-p(E\setminus \{i\})+p(E\setminus \{i\})-c(E\setminus \{i\}) \\
&=w^{i}(E)-c(E\setminus \{i\})=v_{\gamma}(E),
\end{align*}
where the inequality follows from the fact that $(p,\mu)$ is a competitive equilibrium, and the last equality comes from relation (\ref{valuation-basic}) under our assumption on the value of $E$. It remains to check that $X$ is efficient. Since at the matching $\mu$ each buyer $i\in B$ and each seller $k\in S$ appears in at most one buyer-seller pair or one buyer-middleman-seller triplet and each middleman can serve arbitrary number of buyer-seller pairs, we get
\begin{align*}
X(N)&=\sum\limits_{i\in B} \left[w^{i}(E^{\mu(i)})-p(E^{\mu(i)}\setminus\{i\})\right]+p(M\cup S)-c(M\cup S) \\
&=\sum\limits_{i\in B}\left[ w^{i}(E^{\mu(i)})-p(E^{\mu(i)}\setminus \{i\})+p(E^{\mu(i)}\setminus \{i\})-c(E^{\mu(i)}\setminus \{i\}) \right]\\
& \quad +\sum\limits_{j\notin\bigcup\limits_{i\in B}E^{\mu(i)}} \left(p_{j}(\mu)-c_{j}(\mu)\right)+\sum\limits_{k\notin\bigcup\limits_{i\in B}E^{\mu(i)}} (p_{k}-c_{k}) \\
&=\sum\limits_{i\in B}\left[w^{i}(E^{\mu(i)})-c(E^{\mu(i)}\setminus\{i\})\right] \\
&=\sum\limits_{i\in B}v_{\gamma}(E^{\mu(i)})=\sum\limits_{E\in\mu}v_{\gamma}(E),
\end{align*}
where the third equality holds since $p_{j}^{ik}=c_{j}^{ik}$ for unassigned middlemen $j\in M$ and $p_{k}=c_{k}$ for unassigned seller $s_{k}\in S$.
The fourth equality holds because of the optimality of $\mu$ by Lemma~\ref{lem:opt_match} and the observation that, as in any optimal matching, for any $i\in B$, we must have $w^{i}(E^{\mu(i)})-c(E^{\mu(i)}\setminus\{i\}) \geq 0$.  

We have shown that if $(p,\mu)$ is a competitive equilibrium, then its competitive equilibrium payoff vector $X\in \mathcal{CE}(\gamma)$ is a core allocation. Next, we show that the reverse implication holds. That is, if $X\in \mathbb{R}^{B}\times\mathbb{R}^{M}\times\mathbb{R}^{S}$ is a core allocation, then it is the payoff vector related to some competitive equilibrium $(p,\mu)$, where $\mu$ is any optimal matching and $p$ is a competitive equilibrium price vector. 

Let us define $p_{j}^{ik}=X_{j}+c_{j}^{ik}$ for all basic triplet $\{i,j,k\}\in\mathcal{B}$. Given any optimal matching $\mu$, take the aggregate service prices $p_{j}(\mu)=\sum\limits_{\{i,j,k\}\in\mu} p_{j}^{ik}$ for all middleman $j\in M$. Define $p_{k}=X_{k}+c_{k}$ for all sellers $k\in S$. Notice first that, since $X\in Core(\gamma)$, if seller $k$ is unassigned by the matching $\mu$, $p_{k}=X_{k}+c_{k}=c_{k}$ and if a middleman does not mediate a trade between a buyer-seller pair $(i,k)$ under the matching $\mu$, $p_{j}^{ik}=X_{j}+c_{j}^{ik}=c_{j}^{ik}$. Moreover, $X(E^{\mu(i)})=v_{\gamma}(E^{\mu(i)})$ for all $i\in B$ and $X(E')\ge v_{\gamma}(E')$ for all $E'\in\mathcal{B}^{i}$. 
Notice that by the optimality of $\mu$, $v_{\gamma}(E^{\mu(i)})=w^{i}(E^{\mu(i)})-c(E^{\mu(i)}\setminus\{i\})\geq 0$ for all $i\in B$.
Then, for all $i\in B$ and $E'\in \mathcal{B}^{i}$,

\begin{align*}
w^{i}(E^{\mu(i)})-p(E^{\mu(i)}\setminus\{i\})&=v_{\gamma}(E^{\mu(i)})+c(E^{\mu(i)}\setminus\{i\})-p(E^{\mu(i)})\setminus\{i\})\\
&=X(E^{\mu(i)})+c(E^{\mu(i)}\setminus \{i\})-p(E^{\mu(i)}\setminus\{i\})\\
&=X_{i}\\
&\geq v_{\gamma}(E')-X(E'\setminus\{i\})\\
&=v_{\gamma}(E')-\big[ p(E'\setminus\{i\})-c(E'\setminus \{i\})\big]\\
&\geq w^{i}(E')-p(E'\setminus\{i\})
\end{align*}
where the first inequality follows from the fact that $X\in Core(\gamma)$ and the second inequality comes from relation (\ref{valuation-basic}). This shows that $E^{\mu(i)}\in D_{i}(p)$ which concludes the proof. 
\end{proof}

We have shown that the core and the set of competitive equilibrium payoff vectors coincide under the assumption that middlemen need not charge the same price for two different buyer-seller trade. Next example shows that if we consider the case where middlemen charge a fixed price for each buyer-seller trade that they mediate, then a core allocation need not to be supported by competitive prices. 

\begin{example}
\label{ex:fixedprice_core_not_ce}
Consider a market with middlemen $\gamma = (B, M, S, A, \hat{A})$ where $B = \{b_1, b_2\}$, $M = \{m_1, m_2\}$, and $S = \{s_1, s_2\}$ are the set of buyers, the set of middlemen, and the set of sellers, respectively. The total surplus of those basic coalitions formed by a pair of buyer and seller is given by the following two-dimensional matrix $A=(a_{ik})_{\substack{i\in B \\ k\in S}}$:
$$
A=\bordermatrix{~ & s_{1} & s_{2} \cr
                  b_{1} & 3 & 2 \cr
                  b_{2} & 1 & 5 \cr}
$$
and joint surplus generated by triplets formed by a buyer, a middleman, and a seller is given by the following three-dimensional matrix $\hat{A}=(\hat{a}_{ijk})_{\substack{i\in B\\ j\in M \\k\in S}}$:
$$\hat{A}=\begin{array}{cc}
 & \begin{array}{cc} s_{1} & s_{2}\end{array} \\
 \begin{array}{c} b_{1} \\ b_{2} \end{array} &
 \left(\begin{array}{cc} 4\; &\; 3 \\ 3\; &\; 5 \end{array}\right)\\
  & m_{1} \end{array}\qquad
 \begin{array}{cc}
 & \begin{array}{cc} s_{1} & s_{2} \end{array}\\
 \begin{array}{c} b_{1} \\ b_{2} \end{array} &
 \left(\begin{array}{cc} 6 \; &\; 2 \\ 2\; & \; 6 \end{array}\right)\\
  & m_{2} \end{array}.$$
Notice first that there is a unique optimal matching, $\mu=\{(b_{1},m_{2},s_{1}),(b_{2},m_{2},s_{2})\}$. \footnote{Without loss of generality, let us denote buyers, middlemen, and sellers by their indices in an order of buyer-middleman-seller. For instance, $\{1,2,1\}$ for the coalition $\{b_{1},m_{2},s_{1}\}$.} If the service price of each possible trade was fixed for any middlemen, then under the optimal matching $\mu$, middleman $m_{2}$ would charge the same price for both trades she mediates, i.e. $p_{2}^{11}=p_{2}^{22}$. Suppose now that costs for sellers and middlemen assigned under the matching $\mu$ are equal to zero, $c_{1}=c_{2}=c_{2}^{11}=c_{2}^{22}=0$. Then, following ($\ref{valuation-basic}$), net valuations of buyer $b_{1}$ and buyer $b_{2}$ are $w^{1}(\{1,2,1\})=v(\{1,2,1\})=6$ and $w^{2}(\{2,2,2\})=v(\{2,2,2\})=6$.

Now, take the core allocation $X=(3,5;0,3;1,0)$. Since all costs are equal to zero, $z_{1}(p,\mu)=p_{1}-c_{1}=1$ implies that $p_{1}=1$ for seller $s_{1}$, $z_{2}(p,\mu)=0$ implies that $p_{2}=0$ for seller $s_{2}$. For middleman $m_{1}$, $y_{1}(p,\mu)=0$ since she is unassigned under the optimal matching $\mu$ whereas $y_{2}(p,\mu)=p^{11}_{2}-c_{2}^{11}+p^{22}_{2}-c_{2}^{22}=1.5-0+1.5-0=3$. Together with $p^{11}_{2}=1.5$, $p_{1}=1$ imply that, under the matching $\mu$, $w^{1}(\{1,2,1\})-p^{11}_{2}-p_{1}=6-1.5-1=3.5\neq 3=x_{1}(p,\mu)$ and $p^{22}_{2}=1.5$ together with $p_{2}=0$ imply that $w^{2}(\{2,2,2\})-p^{22}_{2}-p_{2}=6-1.5-0=4.5\neq 5=x_{2}(p,\mu)$. Hence, the core allocation $X$ is not supported by the competitive equilibrium $(p,\mu)$ when middlemen charge a fixed service price for each trade they mediate. 

Note that the core allocation $X=(3,5;0,3;1,0)$ can be supported by the competitive prices when middleman $m_{2}$ have different service prices for each trade she mediates under the matching $\mu$. Take $p_{2}^{11}=2$ and $p_{2}^{22}=1$, $p_{1}=1$, and $p_{2}=0$. Suppose again that costs for sellers and middlemen assigned under the matching $\mu$ are equal to zero, $c_{1}=c_{2}=c_{2}^{11}=c_{2}^{22}=0$. Then, following ($\ref{valuation-basic}$), $w^{1}(\{1,2,1\})=v(\{1,2,1\})=6$ and $w^{2}(\{2,2,2\})=v(\{2,2,2\})=6$. One can easily see that, $x_{1}(p,\mu)=w^{1}(\{1,2,1\})-p^{11}_{2}-p_{1}=6-2-1=3$, $x_{2}(p,\mu)=w^{2}(\{2,2,2\})-p^{22}_{2}-p_{2}=6-1-0=5$, $y_{1}(p,\mu)=0$ since she is unassigned under the matching $\mu$, $y_{2}(p,\mu)=p_{2}^{11}-c_{2}^{11}+p_{2}^{22}-c_{2}^{22}=2-0+1-0=3$, $z_{1}(p,\mu)=p_{1}-c_{1}=1-0=1$, $z_{2}(p,\mu)=p_{2}-c_{2}=0$.
\end{example}

\section{Concluding remarks}
\label{sec:conc}

We have considered a class of multi-sided matching markets where a trade between buyer-seller pairs can be realized with or without middlemen. We allow a middleman to serve the entire market by mediating as many trades as the size of the short side of the market while buyer-seller pairs can also trade directly. We have associated a classical two-sided assignment market with a matching market with middlemen by taking for each buyer-seller pair the maximum surplus that this pair can achieve with the costless help of middlemen. We have shown that the non-empty core of this associated two-sided assignment market can be embedded in the core of the matching market with middlemen by allocating zero payoff to all middlemen.

For these markets we have introduced an associated TU game, thereby we have extended the classical (two-sided) assignment markets of \cite{ss71} to a special multi-sided case. We have shown that every matching market with middlemen has a non-empty core. In addition, we have proved that there exists a buyer-optimal and a seller-optimal core allocation for every matching market with middlemen. Unlike in other extensions previously studied, it is shown that all buyers (sellers) achieve their marginal contribution simultaneously at the buyer-optimal (seller-optimal) core allocation. In addition, we have provided an example to show that it is not the case for middlemen: in general there does not exist an allocation that every middleman weakly prefers to any other allocation in the core. Finally, we have studied the relationship between the core and the set of competitive equilibria. We have established the coincidence between the core and the set of competitive equilibrium payoff vectors.

A possible direction for further research is to study the relationship between the core and another set-wise solution concept, the bargaining set. \cite{so99} proved the equivalence between the classical bargaining set of \cite{dm67}, a set-wise solution concept based on bargaining possibilities of players, for two-sided assignment games (see also \citealp{s08} for related results on other partitioning games). \cite{b21} generalized this result among others to a larger class known as (quasi)-hyperadditive games. For multi-sided matching markets, the coincidence result between the classical bargaining set and the core is exhibited only to the class of supplier-firm-buyer games (\citealp{as18}). Nevertheless, the methods used in the aforementioned papers do not seem to carry over to our model and we leave exploring the relationship between the bargaining set and the core for future research.

\bibliographystyle{apalike}

\bibliography{ABS}
\end{document}